\renewcommand{\bf}[1]{\textnormal{\textbf{#1}}}
\newcommand{\BZ}{\textnormal{\text{BZ}}}
\newcommand{\tr}{\textnormal{\text{Tr}}}
\newcommand{\vol}{\textnormal{\text{vol}}}
\newcommand{\Gr}{\textnormal{\text{Gr}}}
\newtheorem{proposition}{Proposition}
\newtheorem{theorem}{Theorem}
\newtheorem{corollary}{Corollary}
\newtheorem{remark}{Remark}
\newcommand{\ket}[1]{| #1 \rangle}
\newcommand{\bra}[1]{\langle #1|}
\begin{document}

\title{K\"{a}hler geometry and Chern insulators: Relations between topology and the quantum metric}

\author{Bruno Mera}
\affiliation{Instituto de Telecomunica\c{c}\~oes, 1049-001 Lisboa, Portugal}
\affiliation{Departmento de F\'{i}sica, Instituto Superior T\'ecnico, Universidade de Lisboa, Av. Rovisco Pais, 1049-001 Lisboa, Portugal}
\affiliation{Departmento de Matem\'{a}tica, Instituto Superior T\'ecnico, Universidade de Lisboa, Av. Rovisco Pais, 1049-001 Lisboa, Portugal}
\author{Tomoki Ozawa}
\affiliation{Advanced Institute for Materials Research (WPI-AIMR), Tohoku University, Sendai 980-8577, Japan}

\date{\today}

\newcommand{\tom}[1]{{\color{red} #1}}
\definecolor{bbblue}{rgb}{0.,0.24,0.51}
\newcommand{\blue}{\color{bbblue}}

\begin{abstract}
We study Chern insulators from the point of view of K\"{a}hler geometry, i.e. the geometry of smooth manifolds equipped with a compatible triple consisting of a symplectic form, an integrable almost complex structure and a Riemannian metric. The Fermi projector, i.e. the projector onto the occupied bands, provides a map to a K\"{a}hler manifold. The quantum metric and Berry curvature of the occupied bands are then related to the Riemannian metric and symplectic form, respectively, on the target space of quantum states. We find that the minimal volume of a parameter space with respect to the quantum metric is $\pi |\mathcal{C}|$, where $\mathcal{C}$ is the first Chern number. We determine the conditions under which the minimal volume is achieved both for the Brillouin zone and the twist-angle space. The minimal volume of the Brillouin zone, provided the quantum metric is everywhere non-degenerate, is achieved when the latter is endowed with the structure of a K\"{a}hler manifold inherited from the one of the space of quantum states. If the quantum volume of the twist-angle torus is minimal, then both parameter spaces have the structure of a K\"{a}hler manifold inherited from the space of quantum states. These conditions turn out to be related to the stability of fractional Chern insulators. For two-band systems, the volume of the Brillouin zone is naturally minimal provided the Berry curvature is everywhere non-negative or nonpositive, and we additionally show how the latter, which in this case is proportional to the quantum volume form, necessarily has zeros due to topological constraints.
\end{abstract}

\maketitle
\section{Introduction}
The notion of topological phases has drastically changed our understanding of gapped phases of matter. There is much to learn about a band insulator beyond the assertion that it has a gap separating the valence bands, which form the occupied bands at zero temperature, from the conduction bands. In the particular case of two spatial dimensions, in the absence of other symmetries, the occupied bands may have a non-trivial topological ``twist'' determining what is called a Chern insulator. This topological twist is not merely a mathematical observation and it is manifested in the topological response of the system as known in integer and fractional quantum Hall effects, in which the transverse Hall conductivity is proportional to the first Chern number, a topological invariant of the vector bundle of occupied states over the two-dimensional Brillouin zone.
Finer properties of insulating states refer not just to the topology, but also to their geometry~\cite{res:11,res:18}. Of particular interest is the momentum-space quantum metric, related to the overlap of Bloch states at nearby quasi-momenta~\cite{res:11,res:18,mat:ryu:10}. As it turns out, the integral of this metric over the Brillouin zone~\cite{oza:gol:19} is associated to the spread functional and the localization tensor of the material~\cite{mar:van:97, res:98, res:sor:99, sou:wil:mar:00, mat:ryu:10}. There are numerous proposals in the literature for extracting the quantum metric~\cite{kol:gri:pol:13, neu:cha:mud:13, oza:18, oza:gol:18, ble:sol:mal:18, ble:mal:gao:sol:18, kle:ras:cue:bel:18}.
A proposal~\cite{oza:gol:19}, making use of the fluctuation-dissipation theorem, shows that the localization tensor can also be measured through spectroscopy in synthetic quantum matter, such as ultracold atomic gases or trapped ions.
The momentum-space quantum metric has recently been observed in ultracold gases~\cite{Asteria:2019} and in a microcavity hosting exciton-polariton modes~\cite{Gianfrate:2020}.

The connection of microscopic geometric quantities, such as the Brillouin zone Berry curvature and quantum metric, to macroscopic geometric quantities associated with ground state properties of the band insulator is given, at zero temperature, by integrating the former quantities over the Brillouin zone. Equivalently, one can consider the system in finite size and take periodic boundary conditions twisted by phases. As one varies these phases, one obtains a family of many-particle ground states over the so-called twist-angle space and from which we can extract the macroscopic geometric quantities mentioned above. In two spatial dimensions, the Berry curvature in twist-angle space yields the famous Thouless--Kohomoto--Nightingale--den Nijs result for the Hall conductivity~\cite{wit:16,kud:wat:kar:hat:19,mer:20:1} and the quantum metric is related to the localization tensor~\cite{oza:gol:19,mer:20:1}. The anisotropy of the localization tensor is measured by a modular parameter $\tau$ in the upper half-plane which describes a complex structure in twist-angle space~\cite{mer:20:1}.

The study of the geometry of band insulators can also be used to understand if the material can host stable fractional topological phases~\cite{roy:14, jac:mol:roy:15}. In particular, for a Chern band to have an algebra of projected density operators which is isomorphic to the $W_{\infty}$ algebra found by Girvin, MacDonald, and Platzman for the fractional quantum Hall effect~\cite{gir:mac:pla:86}, certain geometric constraints are naturally found which enforce a compatibility relation between the momentum-space quantum metric and the Berry curvature of the band~\cite{roy:14,jac:mol:roy:15}. This compatibility relation between the quantum metric and the Berry curvature is, as noted in Ref.~\cite{jac:mol:roy:15}, the same compatibility required, in an oriented surface, between the Riemannian metric and a symplectic form for these to endow the surface with a K\"{a}hler structure.

Motivated by the above, we provide in this paper a detailed study of the geometry of Chern insulators from the perspective of K\"{a}hler geometry. In particular, we find that the minimal quantum volume, as determined by the Riemannian volume form induced by the quantum metric, is given by $\pi |\mathcal{C}|$, where $\mathcal{C}$ is the first Chern number of the occupied band. Furthermore, we determine the conditions of achieving the minimal quantum volume both in the Brillouin zone and in the twist-angle space. The minimal volume of the Brillouin zone, provided the quantum metric is everywhere non-degenerate, is achieved when the latter is endowed with the structure of a K\"{a}hler manifold induced from the one of the space of quantum states. If the quantum volume of the twist-angle torus is minimal, then both tori have the structure of a K\"{a}hler manifold inherited from the space of quantum states. These conditions are related to the geometric stability conditions for fractional Chern insulators presented in Refs.~\cite{roy:14,jac:mol:roy:15}, measuring the deviation from the lowest Landau-level physics. For two-band systems, the quantum volumes are naturally minimal provided the Berry curvature is everywhere non-negative or nonpositive, and we additionally show how the latter, which in this case is proportional to the quantum volume form, necessarily has zeros due to topological constraints. The results of this paper are accompanied, complemented, and supported by those of Ref.~\cite{oz:mer:21:published}, in which physical implications of the mathematical structures we find in this paper are studied and verified with explicit models.

This paper is organized as follows. In Sec.~\ref{sec: the geometry of chern insulators}, we explore the geometry of Chern insulators from the point of view of K\"{a}hler geometry. In Sec.~\ref{sec: main results}, we present our main results, namely Theorems~\ref{th:kahlerband},~\ref{th:tomoki} and~\ref{th:bruno}, together with some corollaries and remarks. In Sec.~\ref{sec: physical interpretation of the Kahler condition}, we provide a physical interpretation for the results of Theorem~1 in terms of the many-particle ground state of the band insulator. Finally in Sec.~\ref{sec: conclusions} we draw the conclusions. In Appendix~\ref{Appendix:A}, the reader can find detailed mathematical derivations of the results presented in Sec.~\ref{sec: main results}.

\section{The geometry of Chern insulators}
\label{sec: the geometry of chern insulators}
In a tight-binding description of an insulator in two spatial dimensions, in the presence of translation invariance, the Hamiltonian is described by an $n\times n$ matrix $H(\bf{k})$ which is smooth as a function of quasimomentum $\bf{k}$ in the Brillouin zone $\BZ^2$, topologically, a two-torus. The number $n$ specifies the internal degrees of freedom. The presence of a gap below the Fermi level $E_F$ allows one to define the Fermi projector $P(\bf{k})=\Theta(E_F-H(\bf{k}))$, where $\Theta(\cdot)$ is the Heaviside step function, which describes the ground state of the insulator and, up to adiabatic deformation, it determines its topological properties. In fact, the smoothness and the gap condition imply that we have a rank $r$ Hermitian vector bundle $E\to\BZ^2$, where $r$ is the number of occupied bands and is the constant rank of the Fermi projector, whose fiber over $\bf{k}\in\BZ^2$ is simply $\text{Im}\left[P(\bf{k})\right]\subset \mathbb{C}^n$. The smooth family of orthogonal projectors $\{P(\bf{k})\}_{\bf{k}\in\BZ^2}$ can be seen as a classifying map $P:\BZ^2\to \Gr_{k}(\mathbb{C}^n)$ for the vector bundle $E\to\BZ^2$, where we interpret the Grassmannian manifold of $r$-dimensional subspaces of $\mathbb{C}^n$, denoted $\Gr_r(\mathbb{C}^n)$, as the set of orthogonal projectors of rank $r$. The Chern insulator is determined by the reduced complex \emph{K}-theory class $[E]\in\widetilde{K}^0(\BZ^2)$, which is determined by the \emph{idempotent} $P$~\cite{kit:09,par:08,mer:20:2}.

The space $\Gr_{r}(\mathbb{C}^n)$ has the natural structure of a K\"{a}hler manifold (see, for instance, Refs.~\cite{can:01,nakahara} for more details), i.e., it is a symplectic manifold together with a compatible complex structure $J$ giving it the structure of a complex manifold. By a symplectic manifold structure we mean that it is a smooth manifold together with a non-degenerate closed two-form $\omega$ -- the symplectic form. The almost complex structure $J$ is a linear map on the tangent spaces that squares to minus the identity. Morally, it acts as multiplication by $\sqrt{-1}$. The fact that $\Gr_r(\mathbb{C}^n)$ is a complex manifold means that we can find local complex coordinates $z^j$ and, with respect to these,
\begin{align}
J\left(\frac{\partial}{\partial z^j}\right)=i\frac{\partial}{\partial z^j} \text{ and } J\left(\frac{\partial}{\partial \overline{z}^j}\right)=-i\frac{\partial}{\partial \overline{z}^j}.
\end{align}
Finally, the fact that $\omega$ and $J$ are compatible means that $g=\omega(\cdot,J\cdot)$ is a Riemannian metric. It is said in this case that the complex structure $J$ is $\omega$-compatible. In a K\"{a}hler manifold, the symplectic form is also known as the \emph{K\"{a}hler form}. In the particular case of the Grassmannian, the compatible triple of structures stem from the usual Fubini-Study K\"{a}hler structure on the complex projective space $\mathbb{C}P^{m}$, with $m=\binom{n}{r}-1$, through the Pl\"{u}cker embedding $\iota: \Gr_{r}(\mathbb{C}^n)\hookrightarrow \mathbb{P}(\Lambda^r\mathbb{C}^n)\cong \mathbb{C}P^{m}$ (see Ref.~\cite{huy:05}). The latter map sends $\text{Im}(P)$, or equivalently, $P$ to the line spanned by a Slater determinant made of a linear basis for $\text{Im}(P)$. For this reason we write these structures as $\omega_{FS}, J_{FS}$, and $g_{FS}$.

One can write these three structures in a familiar form by using orthogonal projectors. To do this, it is useful to recall that, since a projector satisfies $P^2=P$, if $Q=I-P$ is the orthogonal complement projector, where $I$ is the identity matrix, then we have the relation $PdPP=QdPQ=0$ and hence the differential $dP$ satisfies $dP=QdPP+PdPQ$. Furthermore, since $P^\dagger=P$ for orthogonal projectors, we see that $dP$ is completely  determined by the triangular matrix $QdPP$ [in an orthonormal basis where $P=\text{diag}(I_{r},0_{n-r})$ it is represented by a lower triangular matrix whose entries are one-forms].

The symplectic and Riemannian structures are given by
\begin{align}
\omega_{FS} &=-\frac{i}{2}\tr\left(\left(QdPP\right)^\dagger\wedge QdPP\right)=-\frac{i}{2}\tr\left(PdP\wedge dP\right) \nonumber ,\\
g_{FS} &=\tr\left(\left(QdPP\right)^{\dagger}QdPP\right)=\tr\left(PdPdP\right),
\end{align}
and the complex structure is defined by
\begin{align}
\left(QdPP\right)\circ J_{FS}\!=\!iQdPP \!\text{ and }\! \left(PdPQ\right)\circ J_{FS}\!=\!-iPdPQ,
\end{align}
where the second equation is essentially the Hermitian conjugate of the first since $J_{FS}$ is an $\mathbb{R}$-linear operator and, therefore, it is completely specified by the first equation. It is not hard to see that, indeed, $\omega_{FS}(\cdot, J_{FS}\cdot)=g_{FS}$. A remarkable property of this structure is that it is invariant under the action of $\text{U}(n)$ by conjugation $P\mapsto UPU^\dagger$, for $U\in\text{U}(n)$, so all the structures are completely determined by knowing them at one particular projector. Before moving on, it is important to note that $\omega_{FS}$ is related to the curvature two-form of a connection on a line bundle over $\Gr_{r}(\mathbb{C}^n)$. Namely, over $\Gr_{r}(\mathbb{C}^n)$ we can consider the vector bundle whose fiber over an orthogonal projector $P$ is simply $\text{Im}(P)$ --- the tautological bundle $E_0\to \Gr_{r}(\mathbb{C}^n)$. This rank $r$-vector bundle is naturally a subbundle of the trivial bundle $\theta^n=\Gr_{r}(\mathbb{C}^n)\times\mathbb{C}^n$. The Berry connection is then simply given by projection of the exterior derivative. The curvature of this connection is determined by the matrix of two-forms
\begin{align}
\Theta=iPdP\wedge dPP.
\end{align}
We can then consider the top exterior power of this bundle, $\Lambda^r E_0\to\Gr_r(\mathbb{C}^n)$, which is a line bundle whose fiber at $P$ is spanned by a Slater determinant
\begin{align}
v_1\wedge \dots \wedge v_{r}, \text{ with } \text{span}_{\mathbb{C}}\{v_{i}\}_{i=1}^r=\text{Im}(P).
\label{eq: Slater det}
\end{align}
The connection induced on this line bundle from the Berry connection on the tautological vector bundle has curvature two-form
\begin{align}
\Omega_0=\tr\left(\Theta\right)=i\tr\left(PdP\wedge dPP\right)=i\tr\left(PdP\wedge dP\right).
\end{align}
The relation between $\omega_{FS}$ and $\Omega_0$ is then clear: $\omega_{FS}=-(1/2)\Omega_0$. In the physics literature, the quantity $\chi=g_{FS}+i\omega_{FS}=g_{FS}-i(\Omega_0/2)$ is usually known as the quantum geometric tensor.

Geometric quantities related to $g_{FS}$ and $\omega_{FS}$ appear naturally in the context of insulators because we have a map $P:\BZ^2\to \Gr_{r}(\mathbb{C}^n)$, which provides the means to pull back the rich geometry in the target to the Brillouin zone. The pullback by $P$ of the tautological vector bundle, i.e., the vector bundle whose fiber at $\bf{k}\in \BZ^2$ is $\text{Im}\left[P(\bf{k})\right]$, is simply the occupied band vector bundle $P^*E_0=E\to\BZ^2$. The pullback of the $r$th exterior power of the tautological vector bundle is a line bundle whose fiber at $\bf{k}$ is spanned by a Slater determinant of a basis for $\text{Im}\left[P(\bf{k})\right]$. This bundle specifies the ground state completely as, physically, the ground state is obtained precisely by occupying the $r$ bands below the Fermi level $E_F$.

The Berry curvature for $\Lambda^rE=P^*\Lambda^rE_0$ is
\begin{align}
\Omega=P^*\Omega_0=i\sum_{i,j=1}^2\tr\left(P(\bf{k})\frac{\partial P}{\partial k_i}(\bf{k})\frac{\partial P}{\partial k_j}(\bf{k})\right)dk_i\wedge dk_j.
\label{eq: momentum-space Berry curvature}
\end{align}
Meanwhile, the pullback of the metric $g_{FS}$ by $P$, is the so-called quantum metric
\begin{align}
g=P^*g_{FS}=\sum_{i,j=1}^2\tr\left(P(\bf{k})\frac{\partial P}{\partial k_i}(\bf{k})\frac{\partial P}{\partial k_j}(\bf{k})\right)dk_idk_j.
\label{eq: momentum-space quantum metric}
\end{align}

Following the discussion in Ref.~\cite{mer:20:1}, we can consider threading a flux through a finite system of size $N\times N$  with periodic boundary conditions, i.e., in a torus. This means that the fermions acquire phases $e^{i\theta_i}$, $i=1,2$, as they are adiabatically moved around the fundamental cycles of position space. The angles $\bm{\theta}=(\theta_1,\theta_2)$ are referred to as the twist-angles and they span the twist-angle torus $T^2_{\theta}$. In momentum space, these angles can be accounted for by sampling the matrix $H(\bf{k})$ at points $\bf{k}=(2\pi/N)\bf{m}+\bm{\theta}/N$, with $\bf{m}\in\{0,\dots,N-1\}^2$. The ground state of the system, as we vary $\bm{\theta}$, produces a smooth Hermitian line bundle $\mathcal{L}\to T^2_{\theta}$, whose Berry curvature has the form
\begin{align}
\widetilde{\Omega}=\frac{1}{N^2}\sum_{\bf{m}}\Omega_{12}\left(\frac{2\pi \bf{m}}{N}+\frac{\bm{\theta}}{N}\right)d\theta_1\wedge d\theta_2,
\end{align}
where $\Omega_{12}=\Omega(\partial/\partial k_1,\partial/\partial k_2)$ is the single component of the Berry curvature in Eq.~\eqref{eq: momentum-space Berry curvature}. In the thermodynamic limit, $N\to\infty$, we get
\begin{align}
\widetilde{\Omega}\to \left(\int_{\BZ^2}\frac{d^2k}{(2\pi)^2}\Omega_{12}(\bf{k})\right)d\theta_1\wedge d\theta_2=\frac{\mathcal{C}}{2\pi }d\theta_1\wedge d\theta_2,
\end{align}
where $\mathcal{C}=\int_{\BZ^2}\Omega/(2\pi)$ is the first Chern number of the occupied Bloch bundle $E\to\BZ^2$. The relation to linear response theory is then
\begin{align}
e^2\widetilde{\Omega}=\sigma_{\text{Hall}}d\theta_1\wedge d\theta_2,
\end{align}
where $\sigma_{\text{Hall}}=e^2 \mathcal{C}/(2\pi)$ is the Hall conductivity of the insulator and $e$ is the charge of the fermions. We find that the Berry curvature is, in the thermodynamic limit $N\to\infty$, flat. Observe additionally that
\begin{align}
\mathcal{C}= \int_{\BZ^2}\frac{ \Omega}{2\pi}=\int_{T^2_{\theta}}\frac{\widetilde{\Omega}}{2\pi},
\end{align}
and, as a consequence, the line bundle $\mathcal{L}\to T^2_{\theta}$ is isomorphic, in the smooth category, to the line bundle $\Lambda^rE\to \BZ^2$, where we identify  $\BZ^2\cong T^2_{\theta}$, through $\bf{k}=(k_1,k_2)\mapsto (\theta^1(\bf{k}),\theta^2(\bf{k}))=(k^1,k^2)$.
In a similar fashion, we can determine the quantum metric in twist-angle space, yielding,
\begin{align}
\widetilde{g}=\frac{1}{N^2}\sum_{i,j=1}^{2}\sum_{\bf{m}}g_{ij}\left(\frac{2\pi \bf{m}}{N}+\frac{\bm{\theta}}{N}\right)d\theta_i d\theta_j,
\end{align}
where $g_{ij}=g(\partial/\partial k_i,\partial/\partial k_j)$ are the components of the quantum metric in Eq.~\eqref{eq: momentum-space quantum metric}. In the same way that the Berry curvature is flat, also the quantum metric is flat in the thermodynamic limit $N\to\infty$
\begin{align}
\widetilde{g}=\sum_{i,j=1}^2\left(\int_{\BZ^2}\frac{d^2k}{(2\pi)^2}g_{ij}(\bf{k})\right)d\theta_id\theta_j.
\label{eq: twist-angle quantum metric}
\end{align}
This metric also appears in the properties of the insulator in question, namely, it is related to the so-called localization tensor. The fact that we have a flat metric tensor, allowed us in Ref.~\cite{mer:20:1}, to extract a modular parameter $\tau\in \mathcal{H}$, which determines a complex structure $\widetilde{j}$ on $T^2_{\theta}$. This complex structure is defined by a rotation of 90 degrees in the standard orientation of the torus with respect to the quantum metric $\widetilde{g}$. This complex structure turned out to be related to the anisotropy of the localization tensor.

In two dimensions having a metric and an orientation specifies a complex structure. Indeed, provided $g$ is non-degenerate, we get, by a rotation of 90 degrees in the standard orientation of $\BZ^2$ with respect to $g$, an almost complex structure over it which we call $j$. We can get an explicit expression for $j$ as follows. At a given point $\bf{k}$ of the Brillouin zone we can always choose an orthonormal frame $e_1,e_2$, consistent with the orientation (i.e., the determinant of the matrix relating $e_1,e_2$ to $\partial/\partial k_1, \partial/\partial k_2$ is positive), for the quantum metric $g$, which is a Riemannian metric, so that
\begin{align*}
g(e_i,e_j)=\delta_{ij},\ i,j=1,2.
\end{align*}
Then at that point $j(e_1)=e_2$ and $j(e_2)=-e_1$. Hence, it follows that
\begin{align}
g\left(j(e_i),e_{j}\right)=\varepsilon_{ij}, \ i,j=1,2,
\end{align}
where $\varepsilon_{ij}$ is the Levi-Civita symbol in two dimensions. Equivalently, $g(j\cdot ,\cdot)= e^1\wedge e^2$ is the volume form associated with $g$, where $e^1,e^2$ is the dual basis of $e_1,e_2$. We will now give an expression for $j$ in the original $(k_1,k_2)$ coordinates. Let us denote by $g_{ij}$ and $j^{i}_j$, $1\leq i,j\leq 2$, the components of $g$ and $j$ in the $(k_1,k_2)$ coordinates, respectively.  From $g(j\cdot ,\cdot)=e^{1}\wedge e^{2}=\sqrt{\det(g)}dk_1\wedge dk_2$, where $\det(g)=\det (g_{ij})$, we get,
\begin{align}
\sum_{k=1}^{2}g_{kj}j^{k}_{i}=\sqrt{\det(g)}\varepsilon_{ij}, \ i,j=1,2.
\end{align}
It is also useful to write the equivalent form
\begin{align}
g_{ij}=\sum_{k=1}^2\sqrt{\det(g)}\varepsilon_{ik}j^{k}_{j} ,\ i,j=1,2.
\label{eq: (g,j,dvol_g) is compatibile}
\end{align}
Either way, it follows that
\begin{align}
j^{i}_{j}=-\frac{1}{\sqrt{\det(g)}}\sum_{k=1}^2\varepsilon^{ik}g_{kj},\ i,j=1,2,
\end{align}
or, in matrix form,
\begin{align}
j=\frac{1}{\sqrt{\det(g)}}\left[\begin{array}{cc}-g_{12} & -g_{22}\\g_{11} & g_{12}\end{array}\right]. \label{eq:j}
\end{align}
By construction, $j$ does indeed satisfy the requirements of an almost complex structure, namely that $j^2=-I$, provided $g$ is non-degenerate. The expression presented above for $j$ is global, i.e., we can use it at any point over the Brillouin zone. However, being able to find local coordinates $x_1,x_2$ in a neighborhood of any point $\bf{k}\in\BZ^2$ such that $j$ assumes the canonical form $j(\partial/\partial x_1)=\partial/\partial x_2$ and $j(\partial/\partial x_2)=-\partial/\partial x_1$ is a much more subtle question, namely that of integrability of $j$, and it implies solving a partial differential equation. If such local coordinates exist, then $z=x_1+ix_2$ will be a local complex coordinate since $j(\partial/\partial z)=i\partial/\partial z$ and $j(\partial/\partial \overline{z})=-i\partial/\partial \overline{z}$. What we did above by choosing a orthonormal basis at a point (which is a linear combination of the natural tangent vectors $\partial/\partial k_1$ and $\partial/\partial k_2$) was to make $j$ look canonical at a specific point. Indeed we can always choose local coordinates $x_i(k_1,k_2)$, $i=1,2$, such that $\partial/\partial x_i$, $i=1,2$, become, at a specific point, an orthonormal frame for $g$ but away from that point that will, in general, not happen. Finding local coordinates such that this happens everywhere, up to a conformal factor, is equivalent to solving the Beltrami equation for isothermal coordinates. Hence, solving the Beltrami equation gives us local holomorphic coordinates in two dimensions. Being able to find local holomorphic coordinates for which $j$ assumes the canonical form of multiplication by $i$ is a very special feature of oriented surfaces. Indeed, in two dimensions any almost complex structure $j$ is integrable and it endows the surface with the structure of a complex manifold~\cite{che:67}. One is then naturally lead to ask under which conditions the geometry induced by $P$ on $\BZ^2$, given by the triple $(P^*\omega_{FS},j,P^{*}g_{FS})=(\omega,j,g)$, is compatible and gives rise to a K\"{a}hler structure. In this case, the map $P:\BZ^2\to \Gr_{r}(\mathbb{C}^n)$ is a K\"{a}hler map in the sense that the pullback of the triple $(\omega_{FS},J_{FS},g_{FS})$ is well-defined as a compatible triple $(\omega=P^*\omega_{FS},j=P^{*}J_{FS},g=P^*g_{FS})$ giving $\BZ^2$ the structure of a K\"{a}hler manifold. The meaning of $j=P^*J_{FS}$ needs further justification. In the case that $g=P^*g_{FS}$ is a Riemannian metric the map $P$ has to be an immersion, i.e., the differential $dP$ is full-rank at every point of the Brillouin zone. In that case $dP$ provides an isomorphism of the tangent spaces of the Brillouin zone to their images. The condition that $\omega (\cdot, j \cdot)$, with $\omega$ non-degenerate, then turns out to imply $dP \circ j = J_{FS} \circ dP$, i.e., the map is holomorphic --- this is what we mean by $j = P^* J_{FS}$. We will see that when $(\omega, j, g)$ form a compatible triple, the Chern number is necessarily negative. We point out that, under the condition of non-degeneracy of $g$, the Riemannian volume form $\sqrt{\det(g)}dk_1\wedge dk_2$ (notice that the orientation of $\BZ^2$ is explicitly used to define it), where $\det(g)$ is the determinant of the matrix representing the metric in periodic coordinates, is symplectic, i.e., it is non-degenerate and closed. As a consequence, if we consider the triple of structures $(\sqrt{\det(g)}dk_1\wedge dk_2,j,g)$, we  easily see that it is compatible, cf. Eq.~\eqref{eq: (g,j,dvol_g) is compatibile}, and it gives $\BZ^2$ a K\"{a}hler structure. However, this triple does not generally coincide with the triple $(\omega,j,g)$. We will see that only when the map is K\"{a}hler these two triples coincide. One can then ask what are the physical consequences of this condition. Furthermore, we could also ask whether $(\widetilde{\omega}=-\widetilde{\Omega}/2,\widetilde{j},\widetilde{g})$ gives $T^2_{\theta}$ the structure of a K\"{a}hler manifold and what are the physical consequences.

As we will see, there will also be cases where $(\omega, -j, g)$ form a compatible triple. Namely, the almost complex structure is defined with respect to the orientation opposite to the natural orientation specified by the ordered pair of coordinates $(k_1, k_2)$. This situation corresponds to the case where the Chern number is positive.

We note that the relation between holomorphicity of the wave function, related to the holomorphicity of the map to the projective space (see Sec.~\ref{sec: physical interpretation of the Kahler condition}),  and the Berry curvature being a K\"ahler form has been studied in the different context of the fractional quantum Hall effect in curved background geometry~\cite{ngu:can:gro:17}, where the relevant metric there is that of the spatial sample as opposed to the quantum metric.
\section{Main results}
\label{sec: main results}
In this section we answer the questions formulated in the end of previous section. We refer the reader to Appendix~\ref{Appendix:A} for the proofs.

We first note that the condition of $P:\BZ^2\to \Gr_r(\mathbb{C}^n)$ to be an immersion is equivalent to $g = P^* g_{FS}$ being non-degenerate [$\det (g) \neq 0$ everywhere], which we use frequently in the discussion below.

Next, we observe that for $P:\BZ^2\to \Gr_r(\mathbb{C}^n)$ to be K\"{a}hler it is enough for $P$ to be an immersion and to be holomorphic, i.e., to satisfy
\begin{align}
Q\frac{\partial P}{\partial \overline{z}}=0,
\end{align}
in local complex coordinates, see Proposition~\ref{prop: Kahler sufficient conditions} and its proof in Appendix~\ref{Appendix:A}. We then have the following result:

\newtheorem*{th:kahlerband}{Theorem \ref{th:kahlerband}}
\begin{th:kahlerband}
 Suppose $P:\BZ^2\to\Gr_{r}(\mathbb{C}^n)$ is an immersion, then the Cauchy-Schwarz like inequality is saturated
\begin{align*}
\sqrt{\det(g_{ij}(\bf{k}))}=\frac{\mp \Omega_{12}(\bf{k})}{2}, \text{ for all } \bf{k}\in \BZ^2,
\end{align*}
if and only if the map $P$ is K\"{a}hler with respect to the triple of structures $(\omega, \pm j, g)$.
\end{th:kahlerband}

We will use the notation $\vol_{g}(\BZ^2)$ and $\vol_{\widetilde{g}}(T^2_{\theta})$ to denote the integrals
\begin{align*}
\vol_{g}(\BZ^2)=\int_{\BZ^2}\sqrt{\det(g)}d^2k
\end{align*}
and
\begin{align*}
\vol_{\widetilde{g}}(T^2_{\theta})=\int_{T^2_{\theta}}\sqrt{\det(\widetilde{g})}d^2\theta,
\end{align*}
where $\det(g)$ and $\det(\widetilde{g})$ are the determinants of the matrices representing $g$ and $\widetilde{g}$ in the periodic coordinates $\bf{k}=(k_1,k_2)$ and $\bm{\theta}=(\theta_1,\theta_2)$, respectively. In the case where the quantum metrics are non-degenerate the quantities $\vol_{g}(\BZ^2)$ and $\vol_{\widetilde{g}}(T^2_{\theta})$ correspond to the volumes of $\BZ^2$ and $T^2_{\theta}$, respectively, as measured by the quantum metrics. For this reason, we will refer to them as the quantum volumes and the top-forms $\sqrt{\det(g)}dk_1\wedge dk_2$ and $\sqrt{\det(\widetilde{g})}d\theta_1\wedge d\theta_2$ as the quantum volume-forms of the Brillouin zone and twist-angle space, respectively. Furthermore we will use the diffeomorphism $\BZ^2\cong T^2_{\theta}$ given by  $\bf{k}=(k_1,k_2)\mapsto (\theta^1(\bf{k}),\theta^2(\bf{k}))=(k^1,k^2)$ to identify the two tori.

We have the following result and associated corollaries.\\

\newtheorem*{th:tomoki}{Theorem \ref{th:tomoki}}
\begin{th:tomoki} The following inequalities hold
\begin{align*}
\pi |\mathcal{C}|\leq \vol_g(\BZ^2)\leq \vol_{\widetilde{g}}(T^2_{\theta}),
\end{align*}
where the left-hand side inequality is saturated if and only if $\sqrt{\det(g)}=\frac{|\Omega_{12}|}{2}$ and provided $\Omega_{12}$ does not change sign over $\BZ^2$, and the right-hand side inequality is satisfied if and only if, for every $\bf{k}\in \BZ^2$,
\begin{align*}
g_{ij}(\bf{k})=e^{2f(\bf{k})}\widetilde{g}_{ij}(\theta=\bf{k}),\ 1\leq i,j\leq 2,
\end{align*}
for some function $f\in C^{\infty}(\BZ^2)$, i.e.,  $g$ is related to $\widetilde{g}$ by a Weyl rescaling, implying that, provided $P$ is an immersion, they share the same complex structure $j$ (and hence the same modular parameter).
\end{th:tomoki}

Observe that the topological invariant $\pi|\mathcal{C}|$ appears as the lower bound for both quantum volumes $\vol_{g}(\BZ^2)$ and $\vol_{\widetilde{g}}(T^2_{\theta})$. In other words, the first Chern number of the occupied Bloch bundle determines the minimal quantum volume. As a consequence of Theorem~\ref{th:tomoki}, we have the following corollaries.

\begin{corollary} If $P$ is an immersion, and thus $\det (g) \neq 0$ everywhere, then $\vol_{g}(\BZ^2)=\pi|\mathcal{C}|$ if and only if $P$ is K\"{a}hler by Theorem~\ref{th:kahlerband}.
\label{corol:1}
\end{corollary}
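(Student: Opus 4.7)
The plan is to derive this corollary by composing the two results immediately preceding it: Theorem~\ref{th:kahlerband} characterizes the K\"{a}hler condition in terms of the pointwise saturation $\sqrt{\det(g(\bf{k}))}=\mp\Omega_{12}(\bf{k})/2$, while Theorem~\ref{th:tomoki} states that the global bound $\vol_g(\BZ^2)\geq \pi|\mathcal{C}|$ is saturated exactly when $\sqrt{\det(g)}=|\Omega_{12}|/2$ pointwise and $\Omega_{12}$ keeps a fixed sign. The bridge between the two is this common pointwise identity, so I would organize both implications around it.

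For the direction $\vol_g(\BZ^2)=\pi|\mathcal{C}| \Rightarrow P$ K\"{a}hler, I would first invoke the saturation clause of Theorem~\ref{th:tomoki} to obtain $\sqrt{\det(g(\bf{k}))}=|\Omega_{12}(\bf{k})|/2$ everywhere, together with the statement that $\Omega_{12}$ does not change sign. The immersion hypothesis forces $\det(g)>0$ on all of $\BZ^2$, so $|\Omega_{12}|>0$ as well; combined with continuity of $\Omega_{12}$ on the connected torus, this upgrades ``does not change sign'' to a single global sign $\pm$ throughout $\BZ^2$, yielding $\sqrt{\det(g)}=\mp\Omega_{12}/2$ with a fixed choice. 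Theorem~\ref{th:kahlerband} applied with that sign then gives that $P$ is K\"{a}hler with respect to $(\omega,\pm j,g)$.

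For the converse, assume $P$ is K\"{a}hler. Theorem~\ref{th:kahlerband} immediately supplies $\sqrt{\det(g)}=\mp\Omega_{12}/2$ everywhere, with a single global sign; in particular $\Omega_{12}$ is nonvanishing and of constant sign. Integrating over $\BZ^2$, the constancy of sign lets me write $\int_{\BZ^2}|\Omega_{12}|\,d^2k=\bigl|\int_{\BZ^2}\Omega_{12}\,d^2k\bigr|=2\pi|\mathcal{C}|$, using the normalization $\mathcal{C}=(2\pi)^{-1}\int_{\BZ^2}\Omega_{12}\,d^2k$. Hence $\vol_g(\BZ^2)=\tfrac{1}{2}\int_{\BZ^2}|\Omega_{12}|\,d^2k=\pi|\mathcal{C}|$.

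Main obstacle: essentially none; the corollary is a direct bookkeeping step once Theorems~\ref{th:kahlerband} and~\ref{th:tomoki} are in place. The only point deserving explicit attention is verifying that the ``no sign change'' condition of Theorem~\ref{th:tomoki} can be promoted to a genuine global sign in the forward direction, and this follows cleanly from connectedness of $\BZ^2$ together with the strict positivity of $\det(g)$ under the immersion hypothesis.
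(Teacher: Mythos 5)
Your proposal is correct and follows essentially the same route the paper intends: the corollary is stated as a direct consequence of combining the saturation clause of Theorem~\ref{th:tomoki} with the pointwise characterization in Theorem~\ref{th:kahlerband}, which is exactly what you do. Your explicit observation that the immersion hypothesis forces $\Omega_{12}$ to be nowhere vanishing, so that connectedness of $\BZ^2$ promotes ``does not change sign'' to a single global sign, is the right bookkeeping step and matches the paper's implicit argument.
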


\begin{corollary} If $\widetilde{g}$ is non-degenerate and $\vol_{\widetilde{g}}(T^2_{\theta})=\pi|\mathcal{C}|$ then $(-\frac{1}{2}\widetilde{\Omega},\widetilde{g},\pm \widetilde{j})$ is a flat K\"{a}hler structure. The converse is also true. In both cases, we also have $\vol_{g}(\BZ^2)=\pi|\mathcal{C}|$. If in the first implication we have, in addition, that $P$ is an immersion, implying $g$ in non-degenerate, then $P$ is a K\"{a}hler map for the same complex structure. This implies that the same Weyl rescaling that relates $g$ and $\widetilde{g}$ in this case, relates also $\Omega$ and $\widetilde{\Omega}$, i.e.,
\begin{align*}
\Omega=e^{2f}\widetilde{\Omega}.
\end{align*}
Because the integral of the curvatures yields $\mathcal{C}$, we have $\int_{\BZ^2}\frac{d^2k}{(2\pi)^2} e^{2f(\bf{k})}=1$, which is consistent with the findings in the proof of Theorem~\ref{th:tomoki}.
\label{corol:2}
\end{corollary}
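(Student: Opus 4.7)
The plan is to deduce all assertions of Corollary~\ref{corol:2} from Theorem~\ref{th:tomoki}, Corollary~\ref{corol:1}, Theorem~\ref{th:kahlerband}, and the thermodynamic-limit flatness of $\widetilde{\Omega}$ and $\widetilde{g}$ derived in Sec.~\ref{sec: the geometry of chern insulators}. For the forward direction I would assume $\vol_{\widetilde{g}}(T^2_\theta) = \pi|\mathcal{C}|$, so that the chain $\pi|\mathcal{C}|\leq \vol_g(\BZ^2)\leq \vol_{\widetilde{g}}(T^2_\theta)$ collapses to equalities and both saturation clauses of Theorem~\ref{th:tomoki} fire: $\sqrt{\det(g)} = |\Omega_{12}|/2$ with $\Omega_{12}$ of a definite sign, and $g_{ij} = e^{2f}\widetilde{g}_{ij}$ (so that $g$ is automatically non-degenerate). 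Since $\widetilde{g}$ has constant coefficients by flatness, $\sqrt{\det(\widetilde{g})}$ is a constant, pinned by $(2\pi)^2\sqrt{\det(\widetilde{g})} = \pi|\mathcal{C}|$ to $\sqrt{\det(\widetilde{g})} = |\mathcal{C}|/(4\pi)$. Comparing this with the explicit flat expression $\widetilde{\Omega} = (\mathcal{C}/2\pi)\,d\theta_1\wedge d\theta_2$ yields
\begin{align*}
-\tfrac{1}{2}\widetilde{\Omega} \;=\; \mp\sqrt{\det(\widetilde{g})}\, d\theta_1\wedge d\theta_2,
\end{align*}
with the sign equal to $-\text{sgn}(\mathcal{C})$. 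By the compatibility identity Eq.~\eqref{eq: (g,j,dvol_g) is compatibile}, this right-hand side is precisely the K\"ahler form of the triple $(\widetilde{g},\pm\widetilde{j})$ for the appropriate orientation, and constancy of all coefficients secures flatness.

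For the converse, if $(-\tfrac{1}{2}\widetilde{\Omega}, \widetilde{g}, \pm\widetilde{j})$ is K\"ahler then compatibility in 2D forces $-\tfrac{1}{2}\widetilde{\Omega}$ to equal the Riemannian volume form of $\widetilde{g}$ up to sign; integrating and using $\int_{T^2_\theta}\widetilde{\Omega}/(2\pi) = \mathcal{C}$ recovers $\vol_{\widetilde{g}}(T^2_\theta) = \pi|\mathcal{C}|$. In either direction, $\vol_g(\BZ^2) = \pi|\mathcal{C}|$ then follows from the collapse of the chain in Theorem~\ref{th:tomoki}. For the final claim, once $P$ is assumed an immersion, Corollary~\ref{corol:1} upgrades $\vol_g(\BZ^2)=\pi|\mathcal{C}|$ to $P$ being K\"ahler, and Theorem~\ref{th:kahlerband} then gives $\omega=-\tfrac{1}{2}\Omega$ equal to $\mp\sqrt{\det(g)}\,dk_1\wedge dk_2$. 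Since in two dimensions the Weyl rescaling gives $\sqrt{\det(g)} = e^{2f}\sqrt{\det(\widetilde{g})}$, and the sign $\mp$ in $\Omega$ and in $\widetilde{\Omega}$ is the same one, dictated by $\text{sgn}(\mathcal{C})$, one immediately reads off $\Omega = e^{2f}\widetilde{\Omega}$. Integrating over $\BZ^2$ and using $\int_{\BZ^2}\Omega/(2\pi) = \mathcal{C} = \int_{\BZ^2}\widetilde{\Omega}/(2\pi)$ delivers the normalization $\int_{\BZ^2}(d^2k/(2\pi)^2)\,e^{2f(\bf{k})} = 1$.

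The analytic content of the argument is light; the one genuinely careful step is sign bookkeeping — making sure that the sign of $\mathcal{C}$, the choice $\pm\widetilde{j}$, the orientation convention used in defining the K\"ahler form, and the $\pm$ appearing in the Cauchy-Schwarz saturation of Theorem~\ref{th:kahlerband} are all assembled into a single mutually consistent prescription, so that the sign appearing in $-\tfrac{1}{2}\widetilde{\Omega} = \pm\text{(volume form)}$ and the sign implicit in choosing $+\widetilde{j}$ versus $-\widetilde{j}$ match as required by the statement. Everything else reduces to direct manipulation of the flat, constant-coefficient expressions for $\widetilde{\Omega}$ and $\widetilde{g}$ together with the already-established results.
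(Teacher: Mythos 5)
Your proposal is correct and follows essentially the route the paper intends: the corollary is stated without a separate appendix proof precisely because it is meant to be assembled, as you do, from the collapse of the chain in Theorem~\ref{th:tomoki}, the constant-coefficient (flat) forms of $\widetilde{g}$ and $\widetilde{\Omega}$ in the thermodynamic limit, the compatibility identity Eq.~\eqref{eq: (g,j,dvol_g) is compatibile}, Corollary~\ref{corol:1}/Theorem~\ref{th:kahlerband} for the K\"ahler upgrade, and integration of $\Omega=e^{2f}\widetilde{\Omega}$ for the normalization. One small caution: your parenthetical that the Weyl rescaling makes $g$ ``automatically non-degenerate'' leans on the $e^{2f}>0$ phrasing of Theorem~\ref{th:tomoki}, whereas the underlying saturation condition is $g_{ij}=g_{11}c_{ij}$, which permits zeros of $g_{11}$ --- this is exactly why the corollary adds the immersion hypothesis separately, and since you do invoke that hypothesis before applying Corollary~\ref{corol:1}, nothing in your argument breaks.
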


\begin{remark} Corollaries~\ref{corol:1} and~\ref{corol:2} are, in the mathematics literature, presented as a corollary of Wirtinger's inequality (see Proposition~\ref{prop:CS} and its proof in Appendix~\ref{Appendix:A}) which states that every complex submanifold of a K\"{a}hler manifold is volume minimizing in its homology class.
\end{remark}

\begin{corollary} If any of the volumes $\vol_{g}(\BZ^2)<\pi$ or $\vol_{\widetilde{g}}(T^2_{\theta})<\pi$ we have $\mathcal{C}=0$.
\label{corol:3}
\end{corollary}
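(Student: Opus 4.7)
The plan is very short because this corollary is essentially an immediate consequence of Theorem~\ref{th:tomoki} combined with the integrality of the first Chern number. The key observation is that $\mathcal{C}\in\mathbb{Z}$, and therefore $|\mathcal{C}|$ is a non-negative integer; in particular, $|\mathcal{C}|<1$ forces $\mathcal{C}=0$.

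Concretely, I would argue by contrapositive. Assume $\mathcal{C}\neq 0$. Then $|\mathcal{C}|\geq 1$, so Theorem~\ref{th:tomoki} immediately gives
\begin{align*}
\pi\leq \pi|\mathcal{C}|\leq \vol_{g}(\BZ^2)\leq \vol_{\widetilde{g}}(T^2_{\theta}).
\end{align*}
In particular both quantum volumes are at least $\pi$, which is the contrapositive of the statement: if either volume is strictly less than $\pi$, then $\mathcal{C}$ cannot be nonzero, i.e., $\mathcal{C}=0$.

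There is no real obstacle here; the only ingredients are (i) the two-sided inequality from Theorem~\ref{th:tomoki}, and (ii) the fact that $\mathcal{C}=\int_{\BZ^2}\Omega/(2\pi)$ is an integer because it is the first Chern number of the occupied Bloch bundle $\Lambda^r E\to\BZ^2$ (equivalently, of $\mathcal{L}\to T^2_{\theta}$), as established in Sec.~\ref{sec: the geometry of chern insulators}. One subtle point worth mentioning in the write-up is that the inequality $\vol_g(\BZ^2)\leq \vol_{\widetilde{g}}(T^2_\theta)$ in Theorem~\ref{th:tomoki} does not require non-degeneracy of $g$ or $\widetilde{g}$, so the hypothesis of the corollary requires no extra regularity assumptions on $P$: the conclusion $\mathcal{C}=0$ follows whenever at least one of the two integrals $\vol_g(\BZ^2)$ or $\vol_{\widetilde g}(T^2_\theta)$ is strictly less than $\pi$.
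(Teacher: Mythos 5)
Your proof is correct and is exactly the argument the paper intends: the corollary follows immediately from the chain $\pi|\mathcal{C}|\leq \vol_{g}(\BZ^2)\leq \vol_{\widetilde{g}}(T^2_{\theta})$ of Theorem~\ref{th:tomoki} together with the integrality of $\mathcal{C}$, which forces $|\mathcal{C}|<1$ to mean $\mathcal{C}=0$. The paper gives no separate proof precisely because this is the whole content, and your remark that no non-degeneracy hypothesis is needed is a correct and worthwhile observation.
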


Corollary~\ref{corol:3} is perhaps one of the most dramatic consequences in the sense that the volumes measured by the quantum metrics, which are purely geometric quantities, allow us to infer about non-trivial topological properties of the insulator.

The results presented place strong constraints in the geometry of Chern bands, which is relevant to the study of the stability of fractional topological insulators. Consider now, for the sake of simplicity of the following discussion, the case of a single band, i.e., $r=1$. In the works of Refs.~\cite{roy:14, jac:mol:roy:15}, the stability of topological phases arising in fractionally filled Chern insulators is studied. The saturation of the inequality in Proposition~\ref{prop:CS} together with the Fubini-Study metric being uniform throughout the Brillouin zone is then found as a criterion for the algebra of projected operators to be isomorphic to the one in the fractional quantum Hall effect --- the $W_{\infty}$ algebra found by Girvin, MacDonald and Platzman~\cite{gir:mac:pla:86}. Observe that this condition corresponds to the saturation of both inequalities in Theorem~\ref{th:tomoki} and, provided non degeneracy of the metric, it is the situation described in Corollary~\ref{corol:2}, supplemented with the additional condition that the conformal factor is trivial to have constancy of the metric in momentum space. Furthermore, in Ref.~\cite{jac:mol:roy:15} the saturation of the inequality of Proposition~\ref{prop:CS} for every $\bf{k}\in \BZ^2$ was identified as having a K\"{a}hler structure in the Brillouin zone. The saturation of the inequality is not enough to have a K\"{a}hler structure as one needs the immersion condition or, equivalently, $\det(g)\neq 0$ everywhere. We remark that the holomorphicity condition $Q\partial P/\partial\bar{z}=0$ found in Proposition~\ref{prop: Kahler sufficient conditions} is the momentum space form of the real space equation $Q_{\alpha}(x'-iy')P_\alpha=0$ found in Ref.~\cite{roy:14}, for the case when both inequalities in Theorem~\ref{th:tomoki} are satisfied and the complex structure in $\BZ^2$ is the same as the one in $T^2_{\theta}$ and completely determined by a single (multi-valued) complex coordinate of the form $z=k_1\pm \tau k_2$, for some modular parameter $\tau\in \mathcal{H}$ and the sign takes into account a possible orientation flip with respect to the standard one in the Brillouin zone. In Ref.~\cite{lee:cla:tho:17} and in related works~\cite{lee:tho:qi:13,lee:pap:tho:15,cla:lee:tho:qi:dev:15}, a K\"{a}hler structure where the associated complex structure is given by the flat isotropic one, i.e., determined by $z=k_1+\tau k_2$, with $\tau=i$, plays an important role in the construction of ideal fractional Chern insulator models.

Finally, we present a few results which we noticed as a consequence of analyzing the particular case of two-band models.

\newtheorem*{th:bruno}{Theorem \ref{th:bruno}}
\begin{th:bruno} For two-dimensional two-band models, there must exist a point in the Brillouin zone where $\det(g)=0$. In other words, the map $P:\BZ^2\to \Gr_{1}(\mathbb{C}^2)=\mathbb{C}P^1\cong S^2$ cannot be an immersion.
\end{th:bruno}

The next proposition is known in the literature and, as far as we know, the first to notice it were Yu-Quan Ma {\it et al}.~\cite{ma:gu:fan:liu:13,ma:20}.

\begin{proposition} For two-dimensional two-band models the inequality of Proposition~\ref{prop:CS} is saturated $\sqrt{\det(g)}=|\Omega_{12}|/2$.
\end{proposition}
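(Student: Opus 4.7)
The plan is to reduce the statement to the Bloch-sphere parametrization of a two-band Hamiltonian and then verify the saturation by a direct algebraic calculation. For $n=2$ and $r=1$, the traceless part of $H(\bf{k})$ can be written as $\bf{d}(\bf{k})\cdot\bm{\sigma}$, where $\bm{\sigma}=(\sigma_1,\sigma_2,\sigma_3)$ are the Pauli matrices and $\bf{d}(\bf{k})\in\mathbb{R}^3\setminus\{0\}$ by the gap condition. The Fermi projector onto the lower band is then $P(\bf{k})=\tfrac{1}{2}(I-\hat{\bf{d}}(\bf{k})\cdot\bm{\sigma})$ with $\hat{\bf{d}}=\bf{d}/|\bf{d}|$, and the classifying map factors as $\BZ^2\xrightarrow{\hat{\bf{d}}} S^2\cong\mathbb{C}P^1=\Gr_1(\mathbb{C}^2)$.

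First I would substitute this $P$ into Eqs.~\eqref{eq: momentum-space quantum metric} and~\eqref{eq: momentum-space Berry curvature}, apply the Pauli identity $(\bf{a}\cdot\bm{\sigma})(\bf{b}\cdot\bm{\sigma})=(\bf{a}\cdot\bf{b})I+i(\bf{a}\times\bf{b})\cdot\bm{\sigma}$ together with $\tr(\sigma_a)=0$ and $\hat{\bf{d}}\cdot\partial_i\hat{\bf{d}}=0$, and separate the symmetric and antisymmetric parts of $\tr(P\,\partial_i P\,\partial_j P)$. A short computation then yields
\begin{align*}
g_{ij}=\tfrac{1}{4}\,\partial_i\hat{\bf{d}}\cdot\partial_j\hat{\bf{d}},\qquad \Omega_{12}=\tfrac{1}{2}\,\hat{\bf{d}}\cdot\!\left(\partial_1\hat{\bf{d}}\times\partial_2\hat{\bf{d}}\right).
\end{align*}

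Next I would apply Lagrange's identity $|\bf{a}\times\bf{b}|^2=|\bf{a}|^2|\bf{b}|^2-(\bf{a}\cdot\bf{b})^2$ to the $2\times 2$ determinant to obtain $\det(g)=\tfrac{1}{16}\,|\partial_1\hat{\bf{d}}\times\partial_2\hat{\bf{d}}|^2$. Since $\partial_i\hat{\bf{d}}\perp\hat{\bf{d}}$, the cross product $\partial_1\hat{\bf{d}}\times\partial_2\hat{\bf{d}}$ is parallel to the unit vector $\hat{\bf{d}}$, so its length equals $|\hat{\bf{d}}\cdot(\partial_1\hat{\bf{d}}\times\partial_2\hat{\bf{d}})|=2|\Omega_{12}|$, and therefore $\sqrt{\det(g)}=|\Omega_{12}|/2$ pointwise. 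The argument is essentially a bookkeeping exercise, and the only care required is in keeping track of signs and of the symmetric versus antisymmetric traces; conceptually, the identity simply reflects the fact that the target $\mathbb{C}P^1$ is itself a two-dimensional K\"{a}hler manifold, so any immersion from a two-dimensional source into it is automatically (anti-)holomorphic and saturates the bound by Theorem~\ref{th:kahlerband}, while at the non-immersion points guaranteed by Theorem~\ref{th:bruno} both sides vanish identically.
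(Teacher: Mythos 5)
Your proof is correct, but it takes a genuinely different route from the paper's. The paper's argument is essentially one line: since $n=2$ and $r=1$, the complement projector $Q(\bf{k})=I-P(\bf{k})$ has rank $1$, so the vectors $Q(\bf{k})\partial P/\partial k_1$ and $Q(\bf{k})\partial P/\partial k_2$ (equivalently $Q\,\partial_i\ket{\psi}$) live in a one-complex-dimensional space and are automatically proportional over $\mathbb{C}$; hence at every $\bf{k}$ at least one of the saturation conditions (i)--(iii) of Proposition~\ref{prop:CS} holds, and the inequality is saturated identically. You instead verify saturation by brute force in the Bloch-sphere parametrization, and your computation checks out: with $P=\tfrac{1}{2}(I-\hat{\bf{d}}\cdot\bm{\sigma})$ one indeed gets $\tr(P\,\partial_iP\,\partial_jP)=\tfrac{1}{4}\partial_i\hat{\bf{d}}\cdot\partial_j\hat{\bf{d}}-\tfrac{i}{4}\hat{\bf{d}}\cdot(\partial_i\hat{\bf{d}}\times\partial_j\hat{\bf{d}})$, from which your expressions for $g_{ij}$ and $\Omega_{12}$ and the Lagrange-identity step follow, and the observation that $\partial_1\hat{\bf{d}}\times\partial_2\hat{\bf{d}}$ is parallel to $\hat{\bf{d}}$ closes the argument at all points at once (no separate treatment of degenerate points is needed, since the identity $|\partial_1\hat{\bf{d}}\times\partial_2\hat{\bf{d}}|=|\hat{\bf{d}}\cdot(\partial_1\hat{\bf{d}}\times\partial_2\hat{\bf{d}})|$ holds trivially when the cross product vanishes). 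The trade-off: the paper's rank argument is coordinate-free and actually proves the stronger statement that saturation is automatic whenever there is a single unoccupied band ($n-r=1$), for any $n$; your computation is restricted to $n=2$ but yields the explicit closed forms $g=\tfrac{1}{4}\hat{\bf{d}}^*(g_{S^2})$ and $\Omega=\tfrac{1}{2}\hat{\bf{d}}^*(\mathrm{area\ form})$, which make the geometric content transparent and are directly useful for Corollaries~\ref{corol:4} and~\ref{corol:5}. Your closing conceptual remark --- that an immersion into the one-complex-dimensional target $\mathbb{C}P^1$ is automatically $(\pm)$-holomorphic --- is in fact a restatement of the paper's rank-one argument, so the two proofs meet there; just be aware that invoking Theorem~\ref{th:kahlerband} globally is not available (by Theorem~\ref{th:bruno} the map is never a global immersion), which is why the pointwise statement of Proposition~\ref{prop:CS}, or your direct computation, is the right tool.
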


Looking at the proof of Proposition~\ref{prop:CS} in the Appendix, this last result follows from the fact that the orthogonal complement projector $Q(\bf{k})=I-P(\bf{k})$ has rank $1$ in this case and saturation of the inequality is automatic. As a consequence of this proposition we have the two following corollaries.

\begin{corollary}
For two-dimensional two-band models, there must exist a point in the Brillouin zone where the Berry curvature vanishes.
\label{corol:4}
\end{corollary}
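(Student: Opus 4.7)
The plan is to combine the immediately preceding two results. Theorem \ref{th:bruno} asserts that for a two-band model the map $P:\BZ^2\to\Gr_1(\mathbb{C}^2)\cong S^2$ cannot be an immersion, so there must exist at least one point $\bf{k}_0\in \BZ^2$ at which $\det(g(\bf{k}_0))=0$. The Proposition immediately above the corollary states that for two-band models the Cauchy--Schwarz-like inequality is saturated pointwise,
\begin{align*}
\sqrt{\det(g(\bf{k}))}=\frac{|\Omega_{12}(\bf{k})|}{2}, \quad \text{for all } \bf{k}\in\BZ^2.
\end{align*}

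Evaluating this identity at the degeneracy point $\bf{k}_0$ delivers the result: one obtains $|\Omega_{12}(\bf{k}_0)|/2 = 0$, and hence $\Omega_{12}(\bf{k}_0)=0$. Because both sides of the identity are non-negative, there is no cancellation subtlety to worry about; the zero locus of $\det(g)$ and the zero locus of $\Omega_{12}$ literally coincide in the two-band case.

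I do not anticipate any real obstacle, since the content is already packaged in the two preceding statements; the corollary is essentially a one-line deduction. If a more geometric gloss is desired, one can add the remark that for two-band systems the quantum volume form $\sqrt{\det(g)}\,dk_1\wedge dk_2$ is, up to an overall sign, proportional to the Berry curvature two-form $\Omega$, so the existence of a zero of one is tantamount to the existence of a zero of the other, and Theorem \ref{th:bruno} guarantees the former is non-empty.
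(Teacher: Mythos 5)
Your proof is correct and follows exactly the paper's intended route: combine Theorem~\ref{th:bruno} (non-immersion forces a zero of $\det(g)$) with the two-band saturation $\sqrt{\det(g)}=|\Omega_{12}|/2$ to conclude that $\Omega_{12}$ vanishes at that same point. The paper leaves this one-line deduction implicit, so there is nothing to add.
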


We point out that in Ref.~\cite{lee:cla:tho:17} it was noted that it is impossible to have a uniform Berry curvature for two-dimensional two-band models -- a manifestation of Theorem~\ref{th:bruno} and Corollary~\ref{corol:4}.

 \begin{corollary}
 For two-dimensional two-band models, if $\Omega_{12}$ or $-\Omega_{12}$ is everywhere nonnegative then $\vol_{g}(\BZ^2)=\pi |\mathcal{C}|$.
 \label{corol:5}
 \end{corollary}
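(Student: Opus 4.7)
The plan is to chain together the preceding Proposition, which for two-band models gives the pointwise identity $\sqrt{\det(g(\bf{k}))} = |\Omega_{12}(\bf{k})|/2$, with the saturation condition from Theorem~\ref{th:tomoki}. Since the hypothesis is precisely that $\Omega_{12}$ does not change sign over $\BZ^2$ (it is either everywhere $\geq 0$ or everywhere $\leq 0$), both conditions for saturating the lower bound in Theorem~\ref{th:tomoki} are met, so $\vol_g(\BZ^2) = \pi |\mathcal{C}|$ follows immediately.

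More concretely, I would simply integrate. Writing $\Omega = \Omega_{12}\, dk_1 \wedge dk_2$ and using $\mathcal{C} = \int_{\BZ^2}\Omega/(2\pi) = \int_{\BZ^2}\Omega_{12}\, d^2k/(2\pi)$, the Proposition for two-band models gives
\begin{align*}
\vol_g(\BZ^2) = \int_{\BZ^2}\sqrt{\det(g)}\, d^2k = \frac{1}{2}\int_{\BZ^2}|\Omega_{12}|\, d^2k.
\end{align*}
Under the sign hypothesis, $|\Omega_{12}| = \epsilon \Omega_{12}$ with $\epsilon \in \{+1,-1\}$ constant across $\BZ^2$, so $\int_{\BZ^2}|\Omega_{12}|\, d^2k = \epsilon \int_{\BZ^2}\Omega_{12}\, d^2k = 2\pi \epsilon \mathcal{C}$. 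Since $\epsilon \mathcal{C} = |\mathcal{C}|$ (the integral and its integrand have the same sign), we obtain $\vol_g(\BZ^2) = \pi |\mathcal{C}|$.

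There is no real obstacle here: the corollary is essentially a bookkeeping consequence of the earlier Proposition (which supplies the pointwise saturation of the Cauchy--Schwarz-type inequality for free in rank-one-complement situations) together with Theorem~\ref{th:tomoki} (which provides the equality condition). The only thing to verify is that the sign assumption on $\Omega_{12}$ is exactly what is needed to promote the pointwise identity $\sqrt{\det(g)} = |\Omega_{12}|/2$ into the equality $\int \sqrt{\det(g)}\, d^2k = \pi|\mathcal{C}|$, which is clear from the integration above.
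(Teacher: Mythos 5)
Your argument is correct and is essentially the paper's own route: the corollary follows by combining the two-band pointwise saturation $\sqrt{\det(g)}=|\Omega_{12}|/2$ with the sign condition from Theorem~\ref{th:tomoki}, exactly as you integrate it out explicitly. Nothing is missing.
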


 This implies that the quantum volume $\vol_g(\BZ^2)$ directly tells us about the topology of quantum states in this particular case.

\section{Physical interpretation of the K\"{a}hler map condition}
\label{sec: physical interpretation of the Kahler condition}
In this section, we provide a physical interpretation of the K\"{a}hler map condition appearing in Theorem~\ref{th:kahlerband}, namely, that the Fermi projector $P(\bf{k})=\Theta(E_{F}-H(\bf{k}))$, besides providing an immersion of the Brillouin zone $\BZ^2$ to the Grassmannian $\Gr_{r}(\mathbb{C}^n)$, satisfies
\begin{align}
Q\frac{\partial P}{\partial\overline{z}}=0,
\end{align}
in local complex coordinates $z$ coming from the complex structure $j$ (or $-j$, depending on the sign of the first Chern number, see the proof of Theorem~1 in Appendix~\ref{Appendix:A}) in the Brillouin zone. We will phrase this condition in terms of the many-particle ground state of the band insulator, which is obtained by filling the Bloch bands below the Fermi level. Recall that this state is, at the formal level, a Slater determinant over the momenta $\bf{k}$ in the Brillouin zone, where each momentum piece is itself a Slater determinant determined by the $r$-dimensional subspace $\text{Im}\left[P(\bf{k})\right]\subset \mathbb{C}^n$, where $n$ is the total number of bands in the system [see Eq.~\eqref{eq: Slater det}]. We may construct a linear independent set of local Bloch wave functions $\ket{v_{1,\bf{k}}},....,\ket{v_{r,\bf{k}}}$ spanning $\text{Im}\left[P(\bf{k})\right]\subset \mathbb{C}^n$, depending smoothly on $\bf{k}$, for $\bf{k}$ in a small open set, which we can write in terms of the canonical basis of $\mathbb{C}^n$,
\begin{align}
 \ket{v_{i,\bf{k}}}=\sum_{j=1}^{n}v_{i}^{j}(\bf{k})\ket{j}, \ i=1,...,r.
\end{align}
Associated to these vectors, we have operators which create them in the many-particle Fock space:
\begin{align}
\xi_{i\bf{k}}^{\dagger}=\sum_{j=1}^{n}v_{i}^{j}(\bf{k})c_{j\bf{k}}^{\dagger}, \ i=1,...,r,
\end{align}
where $c_{j\bf{k}}^{\dagger}$ denotes the original fermion creation operator, which creates a fermion with momentum $\bf{k}$ and internal degree of freedom $j$, with $j=1,..,n$. The $\bf{k}-$piece of the many-particle ground state is then determined by
\begin{align}
\prod_{i=1}^{r}\xi_{i\bf{k}}^{\dagger}\ket{0},
\end{align}
where $\ket{0}$ denotes the vacuum. This piece is only determined up to a gauge transformation. Namely, we are allowed to perform linear combinations among the states $\ket{v_{i,\bf{k}}}$, $i=1,...,r$, which do not change $\text{Im}\left[P(\bf{k})\right]$. This amounts to taking the $n\times r$ matrix $v(\bf{k})=\left[v_{i}^{j}(\bf{k})\right]_{1\leq i\leq r,1\leq j\leq n}$ and multiplying it on the right by a $k\times k$ invertible matrix $S(\bf{k})$:
\begin{align}
v(\bf{k})\longmapsto v(\bf{k}) S(\bf{k}),
\end{align}
and this can be done locally, in a neighbourhood of the considered momentum, in a smooth way. The effect of this gauge transformation is to transform the $\bf{k}$ piece of the many-particle ground state as
\begin{align}
\prod_{i=1}^{r}\xi_{i\bf{k}}^{\dagger}\ket{0}\longmapsto \det \left(S(\bf{k})\right)\prod_{i=1}^{r}\xi_{i\bf{k}}^{\dagger}\ket{0},
\end{align}
which clearly does not change the state. It is useful to write the projector $P(\bf{k})$ in terms of the matrix $v(\bf{k})$. To do this, note that from  $v(\bf{k})$ we can build a unitary gauge, i.e., a gauge $u(\bf{k})=[u_{i}^{j}(\bf{k})]_{1\leq i\leq r,1\leq j\leq n}$ for which the $r$ columns of the new matrix form an orthonormal basis of $\text{Im}\left[P(\bf{k})\right]$. In matrix notation, this is equivalent to the requirement $u^{\dagger}(\bf{k})u(\bf{k})=I_{r}$. Since this is a gauge transformation, this can be achieved by multiplying on the right by a matrix $S(\bf{k})$ and we have
\begin{align}
u(\bf{k})=v(\bf{k})S(\bf{k}), \text{ such that } u^{\dagger}(\bf{k})u(\bf{k})=I_r.
\end{align}
The unitarity condition above implies that the gauge transformation $S(\bf{k})$ has to satisfy
\begin{align}
S^{\dagger}v^{\dagger}vS=I_r\implies (SS^{\dagger})^{-1}=v^{\dagger}v,
\end{align}
where here and below, we drop the $\bf{k}$ dependence from the expressions just to make the formulas less cumbersome to read.
If we find one $S$ satisfying this equation, obviously, multiplying it on the right by a unitary $r\times r$ matrix will give another valid solution, which is just a manifestation of the fact that all we have done with this choice (coming from the Hilbert space inner product structure) was to reduce the full complex linear gauge group $\text{GL}(r;\mathbb{C})$ to the unitary gauge group $\text{U}(r)\subset \text{GL}(r;\mathbb{C})$. We can then write
\begin{align}
P=uu^{\dagger}=vSS^{\dagger}v^\dagger=v(v^\dagger v)^{-1}v^{\dagger}.
\end{align}
If we recall that the Berry connection, denoted $\nabla$, is determined by the projection of the exterior derivative, we can determine the one-form connection coefficients $A=[A_{i}^{j}]_{1\leq i,j\leq r}$ in the gauge provided by $v$:
\begin{align}
\nabla v= Pdv=v\left((v^{\dagger} v)^{-1} v^{\dagger}dv\right) \implies A=(v^{\dagger} v)^{-1} v^{\dagger}dv.
\end{align}
The reader should compare the above expression to the case in which the gauge is unitary, denoted $u$, for which the expression of the connection coefficients, because $u^\dagger u=I_r$, reduces to the familiar form
\begin{align}
A&=(u^\dagger u)^{-1}u^{\dagger}du=u^\dagger du\nonumber\\
&=\left[u_i^\dagger du_j\right]_{1\leq i,j\leq r}=\left[\bra{u_{i\bf{k}}}d\ket{u_{j\bf{k}}}\right]_{1\leq i,j\leq r},
\end{align}
where $u_i$, $i=1,\dots,r$, denote the columns of the $n\times r$ matrix $u$ which provide a basis of Bloch wave functions $\ket{u_{i\bf{k}}}$, $i=1,\dots,r$, spanning the $r$ Bloch bands below the Fermi level at momentum $\bf{k}$, namely $\text{Im}(P(\bf{k}))$.

Now equipped with the notion of holomorphic local coordinate, we have preferred local gauges which are holomorphic, namely, since can write
\begin{align}
A=A^{(0,1)}+ A^{(1,0)},
\end{align}
where $A^{(0,1)}$ only contains $d\bar{z}$, while $A^{(1,0)}$ only contains $dz$. A \emph{holomorphic gauge} is one in which $A^{(0,1)}=0$. By applying a gauge transformation, $v\mapsto vS$, we have
\begin{align}
A\longmapsto A'= S^{-1}AS +S^{-1}dS.
\end{align}
A holomorphic gauge can be found by solving,
\begin{align}
A'^{(0,1)}=S^{-1}A^{(0,1)}S +S^{-1}\frac{\partial S}{\partial \bar{z}}d\bar{z}=0.
\label{eq: holomorphic gauge}
\end{align}
In complex dimension one this equation has no obstruction to integrability, because there are no $(2,0)$-forms, and one can always find a holomorphic gauge (see Proposition~3.7 of Ref.~\cite{kob:87}). It is then clear the reason for calling it a holomorphic gauge: if we have any two holomorphic gauge choices, $v$ and $v'$, defined on the same local neighbourhood, then the gauge transformation $S$ relating them, $v'=vS$, is seen to be holomorphic by Eq.~\eqref{eq: holomorphic gauge} (set $A^{(0,1)}=A'^{(0,1)}=0$). Essentially, by choosing local holomorphic gauges, what we are doing is to equip our occupied Bloch vector bundle $E\to \BZ^2$ with the structure of a holomorphic vector bundle where the Cauchy-Riemann operator is simply the $(0,1)$ part of the Berry connection. Since the Berry connection preserves the Hermitian inner product, i.e., parallel transport is unitary, the Berry connection becomes the Chern connection of the Hermitian holomorphic vector bundle $E\to\BZ^2$(see Proposition~4.9 of~\cite{kob:87}).

In particular, in a holomorphic gauge $v$, we have
\begin{align}
A^{(0,1)}=\left(v^{\dagger}v\right)^{-1}v^{\dagger}\frac{\partial v}{\partial\bar{z}}d\bar{z}=0 \implies v^{\dagger}\frac{\partial v}{\partial \bar{z}}=0,
\end{align}
since $v^{\dagger}v$ is invertible. Now the condition  $Q\partial P/\partial \bar{z}=0$ means that actually, locally, we can choose a holomorphic gauge and that in such gauge $v=v(z)$, i.e., the local Bloch wave functions themselves can be chosen to be holomorphic functions. To see this, we will write the condition explicitly. We begin by taking $v$ to be a holomorphic gauge, meaning $v^{\dagger}\partial v/\partial \overline{z}=0$ and we will write $P$ in terms of $v$. Also, it will be more convenient to work with the equivalent condition $Q(\partial P/\partial \bar{z})P=0$ which follows due to $P$ being a projection operator (since $Q=I-P$ and $QdQQ=0$ we have $QdP=-QdQ=-QdQP=QdPP$). We then have,
\begin{align}
&Q\frac{\partial P}{\partial \bar z}P =Q\frac{\partial}{\partial\bar{z}}
\left(v(v^{\dagger}v)^{-1}v^{\dagger}\right)P \nonumber\\
&=Q\Big[\frac{\partial v}{\partial\bar{z}}(v^\dagger v)^{-1}v^{\dagger}-v(v^{\dagger}v)^{-1}\frac{\partial}{\partial\bar{z}}\left(v^{\dagger}v\right)(v^{\dagger}v)^{-1}v^{\dagger} \nonumber \\
&+v(v^{\dagger}v)^{-1}\frac{\partial v^{\dagger}}{\partial \bar{z}}\Big]P\nonumber\\
&=Q\Big[\frac{\partial v}{\partial\bar{z}}(v^\dagger v)^{-1}v^{\dagger}-v(v^{\dagger}v)^{-1}\frac{\partial v^{\dagger}}{\partial\bar{z}}v(v^{\dagger}v)^{-1}v^{\dagger} \nonumber \\
&-v(v^{\dagger}v)^{-1}v^{\dagger}\frac{\partial v}{\partial\bar{z}}v(v^{\dagger}v)^{-1}v^{\dagger}+v(v^{\dagger}v)^{-1}\frac{\partial v^{\dagger}}{\partial \bar{z}}\Big]P\nonumber\\
&=Q\Big[\frac{\partial v}{\partial\bar{z}}(v^\dagger v)^{-1}v^{\dagger}-v(v^{\dagger}v)^{-1}\frac{\partial v^{\dagger}}{\partial\bar{z}}v(v^{\dagger}v)^{-1}v^{\dagger} \nonumber \\
&+v(v^{\dagger}v)^{-1}\frac{\partial v^{\dagger}}{\partial \bar{z}}\Big]P\nonumber \\
&=Q\Big[\frac{\partial v}{\partial\bar{z}}(v^\dagger v)^{-1}v^{\dagger}-v(v^{\dagger}v)^{-1}\frac{\partial v^{\dagger}}{\partial\bar{z}} \nonumber \\
&+v(v^{\dagger}v)^{-1}\frac{\partial v^{\dagger}}{\partial \bar{z}}\Big]P=Q\left[\frac{\partial v}{\partial\bar{z}}(v^\dagger v)^{-1}v^{\dagger}\right]P=0,
\end{align}
where we used the holomorphic gauge condition and the fact that $P=v(v^\dagger v)^{-1}v^{\dagger}$ and it is a projection operator, i.e., $P^2=P$. Since $Pv=v(v^{\dagger}v)^{-1}v^{\dagger}v=v$ it follows that $v^{\dagger}P=v^{\dagger}$, so we can write the previous condition as
\begin{align}
Q\frac{\partial v}{\partial \bar{z}}(v^\dagger v)^{-1}v^{\dagger}=0.
\end{align}
Multiplying on the right by $v$, we get
\begin{align}
Q\frac{\partial v}{\partial \bar{z}}=0.
\end{align}
Now since from the holomorphic gauge constraint we have $A^{(0,1)}=0\Leftrightarrow \nabla_{\frac{\partial}{\partial \bar{z}}} v=P\partial v/\partial\bar{z}=0$, it follows that $v$ satisfies $\partial v/\partial \bar{z}=0$, i.e. it is holomorphic. We conclude that the ground state of the band insulator can be locally described by local holomorphic Bloch wave functions $\ket{v_{1\bf{k}}},\dots,\ket{v_{r\bf{k}}}$. This should be compared to what happens in the integer quantum Hall effect in which the ground state wave function is, in the symmetric gauge, a Gaussian factor times a holomorphic Slater determinant of single-particle states taken from the lowest Landau level. The Gaussian factor in the integer Hall effect appears because the symmetric gauge is unitary, but one could also work with a holomorphic gauge, in which case that factor does not appear and, effectively, the single-particle states from the lowest Landau level in the plane are just holomorphic functions which are square integrable with respect to an $L^2-$inner product. In analogy with what happens in the lowest Landau level, here the local wave functions can be chosen to effectively only depend on half of the degrees of freedom, meaning they only depend on the $z$ coordinate and not on $\bar{z}$. An alternative way of understanding this feature for the integer quantum Hall effect is to use the Landau gauge, in which the lowest Landau level is identified with the Hilbert space of square integrable functions in the momentum variable with respect to which translation symmetry is preserved and, hence, it corresponds to the Hilbert space of a particle in dimension $1$. The sign of the first Chern number $\mathcal{C}$ associated with the Fermi projector, by determining whether the map is K\"{a}hler with respect to $j$ or $-j$ (see the proof of Theorem~1 in Appendix~\ref{Appendix:A}), controls the notion of orientation or chirality of the wave function in momentum space, for if we locally write $z=|z|e^{i\theta}$ any holomorphic function will wind in the positive orientation of the $z$ plane, i.e., in the positive $\theta$ direction. Alternatively and in a complementary way, we can also connect this notion of chirality to the sign of the first Chern number by considering parallel transport with respect to the Berry connection around loops. The fact that $P^*\omega_{FS}$ is a symplectic form when the map is an immersion has as a consequence that $P^*\omega_{FS}$ is a volume form on the Brillouin zone. This volume form (because it is nowhere vanishing) determines an orientation on the Brillouin zone and, because the map is K\"{a}hler, it is also the one induced by the complex structure $j$ (or $-j$) which makes the map $P:\BZ^2\to \Gr_{k}(\mathbb{C}^n)$ holomorphic. The fact that $P^*\omega_{FS}=-\Omega/2$, where $\Omega$ is the Berry curvature, then tells us that in this orientation the Chern number is negative (see also the proof of Theorem~\ref{th:kahlerband} in Appendix~\ref{Appendix:A}). Suppose for a moment that we are considering a single band, i.e. $r=1$, for the sake of simplicity (equivalently, consider the top exterior power line bundle $\Lambda^r E\to \BZ^2$, whose fibers describe Slater determinants of vectors in the corresponding fibers of the occupied Bloch vector bundle $E\to \BZ^2$). If $\gamma$ is a closed loop which is the boundary of some region $\Sigma\subset \BZ^2$, with $\Sigma$ having the same orientation as $\BZ^2$, we will have, by Stokes' theorem,
\begin{align}
\exp\left(i\int_{\gamma} A\right)=\exp\left(i\int_{\Sigma} \Omega\right)=\exp\left(-2i\int_{\Sigma} P^*\omega_{FS} \right),
\end{align}
and because $P^*\omega_{FS}$ is the volume form,
\begin{align}
\int_{\Sigma} P^*\omega_{FS}=\int_{\Sigma} \omega > 0
\end{align}
so that the argument of the parallel transport phase, $\int_{\Sigma} \Omega$, which equals, in magnitude, twice the symplectic volume of $\Sigma$ measured with respect to $\omega$, or equivalently, due to $P$ being K\"{a}hler, twice the volume of $\Sigma$ measured with respect to the quantum metric $g$, will necessarily have negative sign. We notice also that the statement is gauge invariant and is valid for every loop $\gamma$ enclosing some arbitrary region $\Sigma\subset \BZ^2$ in the same orientation as the Brillouin zone.

The phenomenon of chirality described above is also similar to what happens in the lowest Landau level, in which the sign of the magnetic field also controls the chirality of the wave function. The fact that the geometrical degrees of freedom are cut in half is, intuitively, and, in fact, formally due to Theorem~\ref{th:kahlerband}, also consistent with the saturation of the Cauchy-Schwarz like inequality in Theorem~\ref{th:kahlerband} in that this band insulator will have less geometrical degrees of freedom in its description because the quantum metric and the Berry curvature are no longer independent quantities.

We point out that a closely related result has been obtained also by Lee \emph{et al.}~\cite{lee:cla:tho:17}; they have shown, in the context of a single band, that the (local) Bloch wave function is a holomorphic function with respect to the isotropic flat complex structure, $\tau = i$, if and only if the inequality $\mathrm{tr}(g) = 2\sqrt{\det (g)} \ge |\Omega_{12}|$ is saturated. Our result holds more generally for a general complex structure given by Eq.~(\ref{eq:j}), also including a general number of bands $r$. Besides, the results presented here re-express this local holomorphicity condition (expressed in terms of local Bloch wave functions) as a global holomorphicity condition for the map to the Grassmannian $\Gr_{r}(\mathbb{C}^n)$ induced by the Fermi projector $P(\bf{k})=\Theta(E_F-H(\bf{k}))$, which, provided the additional immersion condition, $\det (g)\neq 0$ everywhere in $\BZ^2$, is translated into a K\"{a}hler map condition.

\section{Conclusions}
\label{sec: conclusions}
In this paper we have studied the geometry of Chern insulators using the natural K\"{a}hler geometry of the space of orthogonal projectors of a given rank, the latter being associated with the number of occupied bands. In the process, we have established important results, namely Theorem~\ref{th:kahlerband}, which renders the saturation of the Cauchy-Schwarz-like inequality equivalent to a K\"{a}hler map condition for the Fermi projector, and Theorem~\ref{th:tomoki}, which determines the conditions of minimal quantum volume both in the Brillouin zone and in the twist-angle space. The minimal quantum volume is, in both cases, precisely given by the topological invariant $\pi|\mathcal{C}|$, where $\mathcal{C}$ is the first Chern number of the occupied bands. Supplemented with a non-degeneracy condition, the minimal volume conditions amount to these tori being endowed with the structure of a K\"{a}hler manifold inherited from the K\"{a}hler geometry of the space of quantum states. Strikingly, for the case of a single Chern band, when both quantum volumes are equal and minimal, together with the additional condition that the conformal factor relating the metrics is trivial, we get that the algebra generated by the projected density operators is isomorphic to the $W_{\infty}$ algebra in the fractional Hall effect~\cite{roy:14}. Hence, this can also be used as a necessary criterion for the stability of fractional topological insulators.

We have related the K\"{a}hler map condition appearing in Theorem~\ref{th:kahlerband} to the possibility of writing local holomorphic Bloch functions, thus obtaining a description of the many-particle ground state of the insulator which is quite close to the one in the integer quantum Hall effect in which, apart from an overall Gaussian factor, we have a Slater determinant of holomorphic functions. The sign of the Chern number is seen to control the ``chirality'' of the wave function just like the sign of the magnetic field does in the case of the quantum Hall effect.

A remarkable consequence of Theorem~\ref{th:tomoki} is Corollary~\ref{corol:3} which states that if any of the quantum volumes is smaller than $\pi$ then we immediately have that $\mathcal{C}=0$, i.e., we have a trivial Chern insulator. In the accompanying paper~\cite{oz:mer:21:published}, it is also shown with several tight-binding models that even when $\mathcal{C} \neq 0$, the quantum volume often gives a good estimate of the Chern number. This shows how the quantum volumes, intrinsically geometric quantities, are tied to the non-trivial topology of the Chern insulator.

Additionally, we found that, for two-band systems, it is impossible to satisfy the non-degeneracy condition due to topological constraints placed by the fundamental group of the two-torus. However, provided the Berry curvature is everywhere non-negative or nonpositive, the quantum volume will automatically reach the minimal value $\pi|\mathcal{C}|$, the reason being due to the minimal dimension of the total number of internal degrees of freedom.
\begin{acknowledgments}
B.M. acknowledges very stimulating discussions with J. P. Nunes and J. M. Mour\~{a}o. B.M. and T.O. acknowledge the anonymous referees for their insightful comments which lead to substantial improvement of this article. TO acknowledges support from JSPS KAKENHI Grant Number JP20H01845, JST PRESTO Grant Number JPMJPR19L2, JST CREST Grant Number JPMJCR19T1, and RIKEN iTHEMS. BM acknowledges the support from SQIG -- Security and Quantum Information Group, the Instituto de Telecomunica\c{c}\~oes (IT) Research Unit, Ref. UIDB/50008/2020, funded by Funda\c{c}\~ao para a Ci\^{e}ncia e a Tecnologia (FCT), European funds, namely, H2020 project SPARTA, as well as  projects QuantMining POCI-01-0145-FEDER-031826 and PREDICT PTDC/CCI-CIF/29877/2017.
\end{acknowledgments}

\appendix
\section{Detailed mathematical proofs}
\label{Appendix:A}
\begin{proposition} If $P:\Sigma\to \Gr_{r}(\mathbb{C}^n)$ is an immersion of a Riemann surface to the Grassmannian which is holomorphic, i.e., it satisfies, in every local complex $z$ coordinate on the surface,
\begin{align*}
Q\frac{\partial P}{\partial \bar{z}}=0,
\end{align*}
then it is a K\"{a}hler map with respect to the usual K\"{a}hler structure on $\Gr_{r}(\mathbb{C}^n)$ given by $(\omega_{FS},J_{FS}, g_{FS})$. The converse is  also true, i.e., if $P:\Sigma\to\Gr_r(\mathbb{C}^n)$ is K\"{a}hler then it is an holomorphic immersion of $\Sigma$.
\label{prop: Kahler sufficient conditions}
\end{proposition}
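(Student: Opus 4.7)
The plan is to work throughout in a local holomorphic coordinate $z = x + iy$ on the Riemann surface $\Sigma$ and to perform the block decomposition of $dP$ dictated by the identity $dP = QdPP + PdPQ$. Writing $\partial P/\partial z = A + B$ with $A := Q(\partial P/\partial z)P$ and $B := P(\partial P/\partial z)Q$, and using $P^\dagger = P$ to note that $\partial P/\partial \bar z = (\partial P/\partial z)^\dagger = A^\dagger + B^\dagger$, the holomorphicity condition $Q\,\partial P/\partial \bar z = 0$ is immediately seen to be equivalent to $B^\dagger = 0$, i.e.\ $B = 0$. A direct computation with the elementary identities $PA = 0$, $PA^\dagger = A^\dagger$, $PB = B$, $PB^\dagger = 0$, and $A^2 = B^2 = 0$ (and their Hermitian adjoints) then yields the pullback quantities in complex coordinates:
\begin{align*}
g_{z\bar z} &= \tfrac{1}{2}\bigl[\tr(A^\dagger A) + \tr(B B^\dagger)\bigr], \\
g_{zz} &= \tr(AB) = \overline{g_{\bar z\bar z}}, \\
\omega_{z\bar z} &= \tfrac{i}{2}\bigl[\tr(A^\dagger A) - \tr(B B^\dagger)\bigr].
\end{align*}

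For the forward direction I would assume $B = 0$ and $P$ an immersion, so that $dP = A\,dz + A^\dagger d\bar z$ is pointwise nonzero, i.e.\ $A \neq 0$. Then $g_{zz} = 0$ (so $g$ is conformally flat in $(x,y)$ and hence compatible with the Riemann surface's complex structure $j$), $g_{z\bar z} = \tfrac{1}{2}\tr(A^\dagger A) > 0$ (so $g$ is Riemannian), and $\omega_{z\bar z} = \tfrac{i}{2}\tr(A^\dagger A) \neq 0$ (so $\omega$ is non-degenerate). The compatibility $g(\cdot,\cdot) = \omega(\cdot, j\cdot)$ is then a one-line check using $j\partial_{\bar z} = -i\partial_{\bar z}$: $\omega(\partial_z, j\partial_{\bar z}) = -i\omega_{z\bar z} = \tfrac{1}{2}\tr(A^\dagger A) = g_{z\bar z}$, with the remaining components handled identically. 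Hence $(g, j, \omega)$ is a compatible K\"ahler triple on $\Sigma$, and $P$ is K\"ahler.

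For the converse I would assume $P$ is K\"ahler; then $g$ Riemannian already forces $P$ to be an immersion. Evaluating the compatibility $g = \omega(\cdot, j\cdot)$ at $(\partial_z, \partial_{\bar z})$ and again using $j\partial_{\bar z} = -i\partial_{\bar z}$ gives the identity $g_{z\bar z} = -i\omega_{z\bar z}$. Substituting the displayed formulas above yields
\begin{align*}
\tfrac{1}{2}[\tr(A^\dagger A) + \tr(BB^\dagger)] = \tfrac{1}{2}[\tr(A^\dagger A) - \tr(BB^\dagger)],
\end{align*}
equivalent to $\tr(BB^\dagger) = 0$, which forces $B = 0$ because $BB^\dagger$ is positive semi-definite. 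This is precisely $Q\,\partial P/\partial \bar z = 0$, the holomorphicity condition.

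The real work is the block-matrix bookkeeping in the setup paragraph; once the three displayed formulas are in hand, both directions collapse into elementary consequences. The conceptual crux is that $g_{z\bar z}$ and $-i\omega_{z\bar z}$ differ only in the sign of the manifestly non-negative quantity $\tr(BB^\dagger)$, so the K\"ahler compatibility forces exactly this non-negative quantity to vanish. A final remark worth including: running the same argument with $-j$ in place of $j$ exchanges the roles of $A$ and $B$, yielding $A = 0$ instead; this is the holomorphicity condition with respect to the opposite orientation and corresponds to positive first Chern number, consistent with the $\pm j$ dichotomy in Theorem~\ref{th:kahlerband}.
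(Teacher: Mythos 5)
Your proof is correct, and the forward direction is essentially the computation in the paper's own proof of Proposition~\ref{prop: Kahler sufficient conditions}: expand the pullbacks of $\omega_{FS}$ and $g_{FS}$ in a local holomorphic coordinate and use $Q\,\partial P/\partial\bar z=0$ to identify $\omega(\cdot,j\cdot)$ with $g$; your $A$/$B$ bookkeeping is a tidier packaging of the paper's use of $dP=QdPP+PdPQ$. Where you genuinely differ is the converse. The paper dispatches it in one sentence, essentially by appeal to its definition of a K\"ahler map together with the assertion (made without proof in Sec.~\ref{sec: the geometry of chern insulators}) that compatibility of $g$ with a non-degenerate $\omega$ forces $dP\circ j=J_{FS}\circ dP$. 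You instead compute the pullback components \emph{without} assuming holomorphicity and observe that $g_{z\bar z}$ and $-i\omega_{z\bar z}$ differ only by the sign of the manifestly non-negative quantity $\tr\left(BB^{\dagger}\right)$, so the compatibility condition forces $B=0$; this actually proves the implication the paper only asserts, and your closing remark that replacing $j$ by $-j$ exchanges the roles of $A$ and $B$ cleanly anticipates the $\pm j$ dichotomy of Theorem~\ref{th:kahlerband}. Two cosmetic points: state explicitly (as the paper does) that $\omega$ is closed --- trivial here, since any two-form on a surface is closed --- and note that, once $B=0$, the immersion hypothesis is precisely the pointwise non-vanishing of $A$, which is what makes $g_{z\bar z}>0$ and $\omega$ non-degenerate.
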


\begin{proof}
Since $P$ is an immersion, we will immediately have that $g=P^*g_{FS}$ is a Riemannian metric. Furthermore, the two-form $\omega=P^*\omega_{FS}$ is automatically closed. It remains to show that if $P$ is holomorphic then $\omega(\cdot,j\cdot)=g$ and also that $\omega$ is non-degenerate. Recall that $P$ being holomorphic means that the differential of $P$ intertwines the two complex structures (see, for example, Ref.~\cite{huy:05}), i.e.,
\begin{align}
J_{FS}\circ dP=dP\circ j.
\end{align}
If we take a complex coordinate $z$ associated to $j$ over $\Sigma$, so that
\begin{align}
j\left(\frac{\partial}{\partial z}\right)=i\frac{\partial}{\partial z} \text{ and } j\left(\frac{\partial}{\partial \overline{z}}\right)=-i\frac{\partial}{\partial \overline{z}},
\label{eq: j and z}
\end{align}
we have
\begin{align}
J_{FS}\left(\frac{\partial P}{\partial z}\right)=i\frac{\partial P}{\partial z}.
\end{align}
Observing that $dP=QdPP +PdPQ$, that $\left(QdPP\right)\circ J_{FS}=iQdPP$ and that $\left(PdPQ\right)\circ J_{FS}=-iPdPQ$, we get,
\begin{align}
iQ\frac{\partial P}{\partial z}P -iP\frac{\partial P}{\partial z}Q =iQ\frac{\partial P}{\partial z}P +iP\frac{\partial P}{\partial z}Q.
\end{align}
This then implies the equation
\begin{align}
P\frac{\partial P}{\partial z}Q=0,
\end{align}
or, equivalently, the Hermitian conjugate
\begin{align}
Q\frac{\partial P}{\partial \overline{z}} P=Q\frac{\partial P}{\partial \overline{z}}=0,
\end{align}
where the last equality follows from $QdPQ=Q(-dQ)Q=0$.
Observe that
\begin{align*}
&P^*\omega_{FS}=-\frac{i}{2}\tr\left(PdP\wedge dP\right)\\
&=-\frac{i}{2}\left(\tr\left(P\frac{\partial P}{\partial \bar{z}}\frac{\partial P}{\partial z}\right)-\tr\left(P\frac{\partial P}{\partial z}\frac{\partial P}{\partial \bar{z}}\right)\right)d\bar{z}\wedge dz\\
&=-\frac{i}{2}\left(\tr\left(P\frac{\partial P}{\partial \bar{z}}Q\frac{\partial P}{\partial z}\right)-\tr\left(P\frac{\partial P}{\partial z}Q\frac{\partial P}{\partial \bar{z}}\right)\right)d\bar{z}\wedge dz\\
&=-\frac{i}{2}\tr\left(P\frac{\partial P}{\partial \bar{z}}Q\frac{\partial P}{\partial z}\right)d\bar{z}\wedge dz=\frac{i}{2}\tr\left(P\frac{\partial P}{\partial \bar{z}}\frac{\partial P}{\partial z}\right)dz\wedge d\bar{z},
\end{align*}
where we used $PdPP=0$ and $QdQQ=0$ which imply that $PdP=PdPQ=dPQ=-dQQ$ and similarly $dPP=QdPP=QdP$. Also, by the same reasoning
\begin{align*}
P^*g_{FS}&=\tr\left(PdPdP\right)=\tr\left(P\frac{\partial P}{\partial \bar{z}}\frac{\partial P}{\partial z}\right)d\bar{z} dz.
\end{align*}
Using Eq.~\eqref{eq: j and z} we get
\begin{align*}
\frac{i}{2}dz\wedge d\bar{z}(\cdot ,j\cdot)&=\frac{i}{2}\left(dz\otimes d\bar{z}-d\bar{z}\otimes dz\right)(\cdot,j\cdot)\\
&=\frac{1}{2}\left(dz\otimes d\bar{z} +d\bar{z}\otimes dz\right)=dzd\bar{z}.
\end{align*}
Thus, we conclude
\begin{align*}
\omega(\cdot,j\cdot)=P^*\omega_{FS}(\cdot,j\cdot)=P^*g_{FS}=g.
\end{align*}
Finally, $\omega$ is non-degenerate because $g$ is so and they are both determined by the same function, namely $\tr\left(P\frac{\partial P}{\partial \overline{z}}\frac{\partial P}{\partial z}\right)$.

The converse is also true, for if $P:\Sigma\to\Gr_r(\mathbb{C}^n)$ is K\"{a}hler it automatically is an immersion since that is implied by $g$ being non-degenerate and it has to be holomorphic in order to be K\"{a}hler. Hence, the proof is concluded.
\end{proof}

\begin{proposition}[Cauchy-Schwarz-type inequality and relation to the Wirtinger inequality]
\label{prop:CS}
We have the inequality
\begin{align*}
\sqrt{\det [g_{ij}(\bf{k})]}\geq \frac{|\Omega_{12}(\bf{k})|}{2},
\end{align*}
with the equality, for a given $\bf{k}\in \BZ^2$, given by at least one of the following conditions
\begin{itemize}
\item [(i)] $Q(\bf{k})\frac{\partial P}{\partial k_1}(\bf{k})=0$,
\item [(ii)]$Q(\bf{k})\frac{\partial P}{\partial k_2}(\bf{k})=0$,
\item [(iii)] $\exists \lambda\in \mathbb{C}-\{0\}: Q(\bf{k})\frac{\partial P}{\partial k_1}(\bf{k})=\lambda Q(\bf{k})\frac{\partial P}{\partial k_2}(\bf{k})$.
\end{itemize}
The above result is a manifestation of the so-called Wirtinger inequality for general K\"{a}hler manifolds~\cite{ban:kat:shn:wei:09}. The Wirtinger inequality states that on a K\"{a}hler manifold, the $k$th exterior power of the K\"{a}hler form, when evaluated on a simple (decomposable) 2$k$-vector of unit volume, is bounded above by $k!$. The case at hand follows with $k=1$.
\end{proposition}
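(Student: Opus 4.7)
The plan is to recognize this as a genuine Cauchy--Schwarz inequality in disguise, applied to a carefully chosen pair of matrices living in the Hilbert--Schmidt inner product space. Concretely, I would introduce, at the given point $\bf{k}\in\BZ^2$, the ``off-diagonal'' blocks $A_i := Q\partial_i P \,P$ for $i=1,2$, which already appeared in the excerpt as the natural carriers of the differential information of $P$. These are matrices, and the Hilbert--Schmidt product $\langle A,B\rangle := \tr(A^\dagger B)$ makes them vectors in a finite-dimensional complex inner product space.

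The first step is to express the quantum geometric tensor $\chi_{ij}=g_{ij}-\tfrac{i}{2}\Omega_{ij}$ as an inner product of the $A_i$'s. Using the identities $PdPP=0$ and $QdPQ=0$ recalled in the paper, together with cyclicity of the trace, I would show
\begin{align*}
\tr(A_i^\dagger A_j) = \tr\bigl(P\partial_i P\, Q\, \partial_j P\, P\bigr) = \tr\bigl(P\partial_i P\,\partial_j P\bigr),
\end{align*}
whose real part is $g_{ij}$ and whose imaginary part yields $-\Omega_{ij}/2$ after matching with $\Omega_{ij}=i\tr(P[\partial_i P,\partial_j P])$. So $\langle A_i,A_j\rangle = g_{ij}-\tfrac{i}{2}\Omega_{ij}$, and in particular $\langle A_i,A_i\rangle = g_{ii}\geq 0$.

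The second step is simply to feed this into the Cauchy--Schwarz inequality $|\langle A_1,A_2\rangle|^2 \leq \langle A_1,A_1\rangle\langle A_2,A_2\rangle$. Writing both sides out, this becomes
\begin{align*}
g_{12}^2 + \tfrac{1}{4}\Omega_{12}^2 \leq g_{11}g_{22},
\end{align*}
which rearranges exactly to $\det(g_{ij}) \geq \Omega_{12}^2/4$, i.e. $\sqrt{\det(g)}\geq |\Omega_{12}|/2$.

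For the equality case I would invoke the standard fact that Cauchy--Schwarz saturates if and only if the two vectors $A_1,A_2$ are linearly dependent over $\mathbb{C}$; this trichotomizes exactly into the stated conditions $A_1=0$, $A_2=0$, or $A_1=\lambda A_2$ for some $\lambda\in\mathbb{C}\setminus\{0\}$, which are the three bullet points (i)--(iii). Finally, I would add a brief remark tying this to the Wirtinger statement cited: since the pullbacks $P^*\omega_{FS}$ and $P^*g_{FS}$ agree with $\omega$ and $g$ on $\BZ^2$, and the Fubini--Study K\"ahler form satisfies Wirtinger's inequality on $\Gr_r(\mathbb{C}^n)$ with $k=1$, our inequality is the $k=1$ instance of that result, the linear-dependence saturation being the infinitesimal translation of the complex-subspace condition for a tangent 2-plane. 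The only mildly subtle bookkeeping is to verify that the imaginary part of $\tr(A_i^\dagger A_j)$ really gives $-\Omega_{ij}/2$ with the correct sign; no genuine obstacle lurks beyond that.
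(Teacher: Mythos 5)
Your proof is correct, and it is a genuinely different route from the one displayed in the paper's appendix. You treat the blocks $A_i = Q\,\partial_i P\, P$ as vectors in the Hilbert--Schmidt inner product space, identify the quantum geometric tensor $\langle A_i, A_j\rangle = g_{ij} - \tfrac{i}{2}\Omega_{ij}$ as a Gram matrix (your sign bookkeeping checks out against the paper's conventions, and the identity $Q\,\partial_i P = Q\,\partial_i P\, P$ reconciles your $A_i$ with the conditions as stated), and read off both the inequality and the exact trichotomy (i)--(iii) from the Cauchy--Schwarz equality case. This is essentially the ``physicist-friendly'' argument that the paper outsources to its companion paper; what the paper itself writes out is an alternative proof via Wirtinger's inequality, choosing a $g_{FS}$-orthonormal frame $e_1, e_2$ for the image of $dP$, relating $\det(g)$ to the change-of-basis matrix, and characterizing saturation by $e_2 = \pm J_{FS}(e_1)$. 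Your approach buys elementarity and self-containment: it works for arbitrary rank $r$ with no Pl\"ucker-embedding reduction, and makes the inequality transparently the positivity of a $2\times 2$ Gram determinant. The paper's Wirtinger route buys geometric content: the saturation condition exhibits the tangent $2$-plane as a complex line in $T\Gr_r(\mathbb{C}^n)$, which places the result within calibrated geometry (complex submanifolds minimize volume in their homology class) and, more importantly, separates the degenerate saturation cases --- (i), (ii), and (iii) with $\lambda$ real --- from the immersive case (iii) with $\mathrm{Im}(\lambda)\neq 0$; that refinement is not part of the proposition's statement but is exactly what the proof of Theorem~\ref{th:kahlerband} needs. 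If you intend your version to feed into that theorem, you would still have to supply the observation that non-degeneracy of $g$ forces $\mathrm{Im}(\lambda)\neq 0$.
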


\begin{proof}
The inequality has been first noted by Roy~\cite{roy:14}, for the case $r=1$. A physicist-friendly proof, for $r\geq 1$, can be found in the accompanying paper~\cite{oz:mer:21:published}. Below we give an alternative proof using the Wirtinger inequality. Before doing that, we give a few additional remarks here to the proof of Ref.~\cite{oz:mer:21:published} because the conditions for the saturation of the inequality are stated here in terms of projectors. We first observe that conditions (i), (ii), (iii) are statements about the image of the differential of $P$ at $\bf{k}$ because $d_{\bf{k}}P(\partial/\partial k_i)$ is given by $Q(\bf{k})\left(\partial P/\partial k_i\right)(\bf{k})+ \text{H.c.}$, $i=1,2$. Moreover, the fact that the Pl\"{u}cker embedding is an isometry allows us to reduce the proof of the proposition to the case of $r=1$, i.e., for the complex projective space $\mathbb{P}(\Lambda^r\mathbb{C}^n)\cong \mathbb{C}P^{m}=\Gr_{1}(\mathbb{C}^{m+1})$ with $m=\binom{n}{r}-1$. Effectively, this amounts to, for each $\bf{k}\in\BZ^2$, replacing the initial rank $r$ projector by the rank $1$ projector associated to the wedge product (Slater determinant) associated to an orthonormal basis of the image of the former. In this case, we have a rank $1$ projector locally specified by a (normalized) Bloch wave function $P(\bf{k})=\ket{\psi(\bf{k})}\bra{\psi(\bf{k})}$.  It follows from the proof in Ref.~\cite{oz:mer:21:published} that the inequality is saturated if and only if $Q(\bf{k})\partial/\partial k_1\ket{\psi(\bf{k})}=0$ or $Q(\bf{k})\partial/\partial k_2\ket{\psi(\bf{k})}=0$ or $Q(\bf{k})\partial/\partial k_1\ket{\psi(\bf{k})}=\lambda Q(\bf{k})\partial/\partial k_2\ket{\psi(\bf{k})}=0$ for some $\lambda\in\mathbb{C}$. Additionally, observing that
\begin{align}
Q(\bf{k})\frac{\partial P}{\partial k_i}(\bf{k})=Q(\bf{k})\left(\frac{\partial}{\partial k_i}\ket{\psi(\bf{k})}\right)\bra{\psi(\bf{k})}, \text{ for } i=1,2,
\end{align}
because $Q(\bf{k})\ket{\psi(\bf{k})}=0$, the proof is complete.

Now we consider the relation with the Wirtinger inequality, which provides an alternative proof. For $k=1$, in our particular setting, this means that if $e_1,e_2$ are orthonormal with respect to $g_{FS}$, we have
\begin{align}
|\omega_{FS}(e_1,e_2)|\leq 1
\end{align}
If $e_1$ and $e_2$ form an orthonormal basis for the image of the differential $dP$ for some momentum $\bf{k}$, then,
\begin{align}
e_{i}=\sum_{j=1}^2a_{i}^{j} dP\left(\frac{\partial}{\partial k_j}\right), i=1,2
\label{eq: o.n. basis of the image of dP}
\end{align}
for some invertible matrix $a=[a_{i}^{j}]_{1\leq i,j\leq 2}$. Then,
\begin{align*}
\omega_{FS}(e_1,e_2)&=\det (a_{i}^{j}) \omega_{FS} (dP\left(\frac{\partial}{\partial k_1}\right),
dP\left(\frac{\partial}{\partial k_1}\right))\\
&= \det(a_{i}^{j})\left(P^{*}\omega_{FS}\right)_{12}=-\frac{1}{2}\Omega_{12} \det(a_{i}^{j}).
\end{align*}
Next, observe that since $g_{FS}(e_i,e_j)=\delta_{ij}$, $1\leq i,j\leq 2$, we have that
\begin{align*}
g_{ij}=(P^*g_{FS})\left(\frac{\partial}{\partial k_i}, \frac{\partial}{\partial k_j}\right)=\sum_{k,l=1}^2(a^{-1})^{k}_{i}
(a^{-1})^{l}_{j}\delta_{kl}.
\end{align*}
As a consequence, $\det(g_{ij})=\det(a^{-1})^2$. Thus, the Wirtinger inequality implies
\begin{align}
|\omega_{FS}(e_1,e_2)|=\left|-\frac{1}{2}\Omega_{12} {\det}(a_{i}^{j})\right|\leq 1
\end{align}
or
\begin{align}
  \sqrt{\det(g)}\geq \frac{1}{2}|\Omega_{12}|,
\end{align}
as claimed. Concerning the saturation of the inequality, two cases are to be considered in light of Wirtinger's inequality. If at a given point $\bf{k}\in\BZ^2$ the mapping is not an immersion the dimension of image of the differential of $P$ at $\bf{k}$ is smaller than $2$. This implies that $\sqrt{\det[g_{ij}(\bf{k})]}=\Omega_{12}(\bf{k})=0$ at that point because we can not build an orthonormal basis with two elements in the image. This covers the cases (i) and (ii), and also (iii) provided $\lambda$ is real since they imply that the differential of $P$ at $\bf{k}$ is not injective (the vanishing of $\sum_{i=1}^{2} a^iQ(\bf{k})\partial P/\partial k_i(\bf{k})$ is equivalent to the vanishing of the image under the differential of $P$ at $\bf{k}$ of the tangent vector $\sum_{i=1}^{2}a^i\partial/\partial k_i\in T_{\bf{k}}\BZ^2$ with $a^i\in\mathbb{R}$, $i=1,2$). We are left with the non-degenerate case, i.e., when the image of the differential of $P$ at $\bf{k}$ is two-dimensional, in which we can apply Wirtinger's inequality and consider the situation in which it is saturated. We will show in the following that this is equivalent to the case (iii) with $\text{Im}\left(\lambda\right)\neq 0$. Because $(\omega_{FS},J_{FS},g_{FS})$ is a compatible triple, the saturation of Wirtinger's inequality holds iff $e_2=\pm J_{FS}(e_1)$. Indeed, if $e_{2}=\pm J_{FS}(e_1)$ we have $|\omega_{FS}(e_1,e_2)|=g_{FS}(e_1,e_1)=1$. Conversely if $|\omega_{FS}(e_1,e_2)|=1$, we have
\begin{align}
|\omega_{FS}(e_1,e_2)|&=|g(J_{FS}(e_1),e_2)|\nonumber \\
&=1\nonumber \\
&=g_{FS}(e_1,e_1)=g_{FS}(e_2,e_2).
\end{align}
Furthermore, by skew-symmetry, $\omega_{FS}(e_i,e_i)=0=g_{FS}(e_i,J(e_i))$, $i=1,2$ and, by compatibility, $g_{FS}(J_{FS}(e_1),J_{FS}(e_1))=\omega_{FS}(e_1,J_{FS}(e_1))=g(e_1,e_1)$. It follows that $J_{FS}(e_1)$ is orthogonal to $e_1$ and it has norm $1$. From the saturation of the Cauchy-Schwarz inequality for the metric $g_{FS}$
\begin{align}
1\!=\!g_{FS}(e_2,e_2)g_{FS}(J_{FS}(e_1),J_{FS}(e_1))&\geq
|g_{FS}(e_2,J_{FS}(e_1))|^2\nonumber\\
&=|\omega_{FS}(e_1,e_2)|^2\nonumber \\
&=1,
\end{align}
it follows that $J(e_1)=\pm e_2$. Now if $e_1,e_2$ form an orthonormal basis of the image of $dP$ at $\bf{k}$ as in Eq.~\eqref{eq: o.n. basis of the image of dP} this in turn implies that
\begin{align}
\sum_{j=1}^{2}a_{1}^{j}\left(iQ\frac{\partial P}{\partial k_j}-i\frac{\partial P}{\partial k_j}Q\right)=\pm\sum_{j=1}^{2}a_{2}^{j}\left(Q\frac{\partial P}{\partial k_j}+\frac{\partial P}{\partial k_j}Q\right).
\end{align}
Due to $P$ and $Q$ being orthogonal to each other it follows that
\begin{align}
\sum_{j=1}^{2}a_{1}^{j}iQ\frac{\partial P}{\partial k_j}=\pm\sum_{j=1}^{2}a_{2}^{j}Q\frac{\partial P}{\partial k_j},
\end{align}
which further implies
\begin{align}
\left(ia_1^1\mp a_2^1\right)Q\frac{\partial P}{\partial k_1}=\left(-ia_2^1\pm a_2^2\right)Q\frac{\partial P}{\partial k_2}.
\end{align}
Now computing
\begin{align}
\lambda=\frac{(-i a^1_2 \pm a^2_2)}{(ia^1_1\mp a^1_2)}&=\frac{(-i a^1_2 \pm a^2_2)( -ia^1_1\mp a^1_2)}{\left((a^1_2)^2 +(a^1_1)^2\right)}\nonumber \\
&=\frac{-a^1_1a^1_2 -a^1_2a^2_2 +i\left(\mp a^1_1a^2_2\pm a^1_2a^1_2\right)}{\left((a^1_2)^2 +(a^1_1)^2\right)}\nonumber \\
&=\frac{-a^1_1a^1_2 -a^1_2a^2_2 \pm i \det(a^{i}_{j})}{\left((a^1_2)^2 +(a^1_1)^2\right)},
\end{align}
which, because $\det(a^i_j)\neq 0$ and $(a^1_2)^2 +(a^1_1)^2\neq 0$ (since $e_2\neq 0$), implies condition (iii) with $\mbox{Im}(\lambda)\neq 0$. This concludes our proof.
\end{proof}

\begin{theorem} Suppose $P:\BZ^2\to\Gr_{r}(\mathbb{C}^n)$ is an immersion, then the Cauchy-Schwarz like inequality is saturated
\begin{align*}
\sqrt{\det(g_{ij}(\bf{k}))}=\frac{\mp \Omega_{12}(\bf{k})}{2}, \text{ for all } \bf{k}\in \BZ^2,
\end{align*}
if and only if the map $P$ is K\"{a}hler with respect to the triple of structures $(\omega, \pm j, g)$.
\label{th:kahlerband}
\end{theorem}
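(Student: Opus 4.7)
The plan is to prove the two implications separately, leveraging Proposition~\ref{prop: Kahler sufficient conditions} (equivalence between the K\"ahler map condition and the holomorphic immersion condition $Q\partial P/\partial\bar{z}=0$) together with Proposition~\ref{prop:CS} (the pointwise saturation conditions of the Cauchy-Schwarz-like inequality). Writing the triple compactly as $(\omega,\epsilon j,g)$ with $\epsilon\in\{+1,-1\}$, the theorem is the equivalence between the pointwise identity $\sqrt{\det g}=-\epsilon\,\Omega_{12}/2$ and the K\"ahler property of $P$.

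For the $(\Leftarrow)$ direction, I would start from the K\"ahler compatibility $g(X,Y)=\omega(X,\epsilon jY)$, which rearranges to $\omega(X,Y)=\epsilon\,g(jX,Y)$ using $j^2=-I$. In real dimension two this forces $\omega$ to equal, up to the sign $\epsilon$, the Riemannian volume form of $g$ in the natural orientation: on an oriented $g$-orthonormal frame $e_1,e_2$ with $je_1=e_2$ one gets $\omega(e_1,e_2)=\epsilon\,g(e_2,e_2)=\epsilon$, so $\omega=\epsilon\sqrt{\det g}\,dk_1\wedge dk_2$. Combined with the identity $\omega=-\Omega/2$, this yields $\sqrt{\det g}=-\epsilon\,\Omega_{12}/2$, as claimed.

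For the $(\Rightarrow)$ direction, the immersion hypothesis makes $g$ non-degenerate everywhere, so inside Proposition~\ref{prop:CS} the cases (i), (ii) and the real-$\lambda$ subcase of (iii) are excluded, and pointwise saturation is equivalent to case (iii) with $\operatorname{Im}\lambda\neq 0$. As shown in the Wirtinger step of that proof, this in turn is equivalent to the image $dP(T_{\bf{k}}\BZ^2)\subset T_{P(\bf{k})}\Gr_r(\mathbb{C}^n)$ being $J_{FS}$-invariant, i.e., a complex line. Since $dP$ is then a fiberwise isomorphism onto its image, I would pull $J_{FS}$ back to a smoothly varying almost complex structure $j'=(dP)^{-1}\circ J_{FS}\circ dP$ on $\BZ^2$, which is automatically compatible with $g=P^*g_{FS}$ because $J_{FS}$ is $g_{FS}$-compatible. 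In two real dimensions there are exactly two almost complex structures compatible with a given Riemannian metric, namely $\pm j$, so $j'=\epsilon' j$ for some $\epsilon'\in\{+1,-1\}$. By construction $P$ is pseudo-holomorphic for $j'$, and integrability of $j'$ is automatic in two real dimensions, so Proposition~\ref{prop: Kahler sufficient conditions} applies in local complex coordinates for $j'$ and yields that $P$ is K\"ahler with respect to $(\omega,\epsilon' j,g)$. Plugging this back into the $(\Leftarrow)$ conclusion gives $\sqrt{\det g}=-\epsilon'\Omega_{12}/2$, and comparing with the assumed sign of the saturation forces $\epsilon'=\epsilon$, completing the identification.

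The main obstacle is the step extracting $J_{FS}$-invariance of $dP(T_{\bf{k}}\BZ^2)$ from the saturation condition, since this carries most of the linear-algebraic content; fortunately this is exactly the Wirtinger equality case already handled inside the proof of Proposition~\ref{prop:CS}, so no new algebra needs to be developed here. The only remaining technical item is to verify that the pullback $j'$ is globally smooth on $\BZ^2$, which is immediate from smoothness of $P$ and the fact that $dP$ is everywhere injective under the immersion hypothesis.
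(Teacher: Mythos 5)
Your proof is correct and follows essentially the same route as the paper's: the converse is the same compatibility computation (carried out on an orthonormal frame rather than via determinants), and the forward direction repackages the paper's explicit isothermal-coordinate construction as the pullback almost complex structure $(dP)^{-1}\circ J_{FS}\circ dP$, identified with $\pm j$ through the two-dimensional dichotomy of metric-compatible complex structures, with both arguments resting on the saturation analysis of Proposition~\ref{prop:CS}, the automatic integrability of almost complex structures on surfaces, and Proposition~\ref{prop: Kahler sufficient conditions}. The only point worth making explicit is that your sign $\epsilon'$ is a priori only defined pointwise, so you should note it is continuous and hence constant on the connected Brillouin zone, which (together with $\Omega_{12}\neq 0$ under saturation plus immersion) justifies the single global sign in the statement.
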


\begin{proof}
By assumption, $P$ is an immersion so the equality
\begin{align*}
\sqrt{\det (g_{ij}(\bf{k}))}=\frac{|\Omega_{12}(\bf{k})|}{2},
\end{align*}
implies that condition (iii) of Proposition~\ref{prop:CS} holds pointwise, i.e., there exists a function $\lambda\in C^{\infty}(\BZ^2)$ such that
\begin{align*}
Q(\bf{k})\frac{\partial P}{\partial k_1}(\bf{k})=\lambda(\bf{k}) Q(\bf{k})\frac{\partial P}{\partial k_2}(\bf{k}), \text{ for all } \bf{k}\in\BZ^2.
\end{align*}
Observe furthermore that by the fact that $P$ is an immersion we have that $g$ is non-degenerate and hence $\mbox{Im}[\lambda(\bf{k})]\neq 0$ for all $\bf{k}\in\mathbb{R}$. Otherwise the quantities $Q\partial P/\partial k_i$, $i=1,2$, which represent the images of the tangent vectors $\partial/\partial k_i$, $i=,1,,2$, would be linearly dependent.

Observe that if we introduce a complex variable $z$ such that
\begin{align*}
\frac{\partial}{\partial \overline{z}}=\frac{1}{2}\left(\frac{\partial}{\partial k_1}-\lambda\frac{\partial}{\partial k_2}\right),
\end{align*}
the statement is
\begin{align*}
Q\frac{\partial P}{\partial \overline{z}}=0.
\end{align*}
To find $z$ we would have to solve,
\begin{align*}
\frac{\partial k_1}{\partial \bar{z}}=\frac{1}{2} \text{ and } \frac{\partial k_2}{\partial \bar{z}}=-\frac{\lambda}{2},\\
\frac{\partial k_1}{\partial z}=\frac{1}{2} \text{ and } \frac{\partial k_2}{\partial z}=-\frac{\overline{\lambda}}{2}.
\end{align*}
Equivalently, since
\begin{align*}
\left(\frac{1}{2}\left[\begin{array}{cc}
1 & 1\\
-\bar{\lambda} &-\lambda
\end{array}\right]\right)^{-1}=\frac{1}{i\text{Im}(\lambda)}\left[\begin{array}{cc}
\lambda & 1\\
-\bar{\lambda} & -1
\end{array}\right],
\end{align*}
well defined since $\mbox{Im}(\lambda)\neq 0$, we have
\begin{align*}
dz\!=\!\frac{1}{i\text{Im}(\lambda)}\!\left(\lambda dk_1 +dk_2\right)\! \text{ and }  d\overline{z}\!=\!-\frac{1}{i\text{Im}(\lambda)}\!\left(\overline{\lambda} dk_1 +dk_2\right)\!.
\end{align*}
The resulting partial differential equations are locally integrable (see, for instance, Example~5 of Chapter~1 in Ref.~\cite{che:67} or Theorem~4.16 of Ref.~\cite{mcd:sal:17}) because by solving them we are finding local isothermal coordinates, in two dimensions, with respect to $P^*g_{FS}$. These local systems of coordinates then glue together holomorphically giving the Brillouin zone the structure of a complex manifold. We now refer to Proposition~\ref{prop: Kahler sufficient conditions} and one side of the implication is proved. Note, however, that the new local complex coordinates $z$ can induce the opposite orientation with respect to the usual one induced by the (multi-valued) complex coordinate $k_1+ik_2$. This is the case when the 1st Chern number is positive. In this case, the map is K\"{a}hler with respect to the complex structure one gets by rotation of $90$ degrees as determined by $g$ and the orientation opposite to the standard orientation of the Brillouin zone; namely the almost complex structure is $-j$, with respect to $j$ as defined in Sec.~\ref{sec: the geometry of chern insulators}.
This follows from the fact that the function $\tr\left(P\frac{\partial P}{\partial \overline{z}}\frac{\partial P}{\partial z}\right)$ determining $P^*\omega_{FS}=-P^*\Omega_0/2=-\Omega/2$ in the local holomorphic coordinate $z$ is always positive (because it also determines the metric) and so the integral in an orientation consistent with it will be positive.

Let us now consider the converse, i.e., if $P:\BZ^2\to\Gr_{r}(\mathbb{C}^n)$ is K\"ahler (which automatically implies that $P$ is an immersion) with respect to the triple of structures $(\omega,\pm j,g)$, then the inequality
\begin{align*}
\sqrt{\det(g_{ij}(\bf{k}))}=\frac{|\Omega_{12}(\bf{k})|}{2}, \text{ for all } \bf{k}\in \BZ^2,
\end{align*}
is saturated. The reason is that the K\"{a}hler condition $\omega(\cdot ,\pm j\cdot)=g(\cdot,\cdot)$, with $\omega=P^*\omega_{FS}$ and $g=P^*g_{FS}$ is locally given by
\begin{align*}
\mp\sum_{j=1}^2\frac{1}{2}\Omega_{ij}(\bf{k})j^{j}_{k}(\bf{k})=g_{ik}(\bf{k}),
\end{align*}
with $j^{i}_{j}(\bf{k})$ being the matrix elements of the complex structure $j$, i.e.,
\begin{align*}
j\left(\frac{\partial}{\partial k_i}\right)=\sum_{j=1}^2 j^{j}_{i}\frac{\partial}{\partial k_j}, \ i=1,2.
\end{align*}
In matrix form
\begin{align*}
\pm\omega j=\mp\frac{1}{2}\Omega j=g.
\end{align*}
Taking determinants, we get
\begin{align*}
\frac{1}{4}\Omega_{12}^2\det(j)=\frac{1}{4}|\Omega_{12}|^2\det(j)=\det(g).
\end{align*}
Since the right-hand side has positive determinant it follows that $\det(j)>1$ and hence, because $j^2=-I$ we have $\det(j)^2=1$ so $\det(j)=1$. Hence we get the equality
\begin{align*}
\frac{1}{2}|\Omega_{12}|=\sqrt{\det(g)},
\end{align*}
as desired. The theorem is, thus, proved.
\end{proof}

\begin{theorem} The following inequalities hold
\begin{align*}
\pi |\mathcal{C}|\leq \mbox{vol}_g(\BZ^2)\leq \mbox{vol}_{\widetilde{g}}(T^2_{\theta}),
\end{align*}
where the left-hand side inequality is saturated iff, for every $\bf{k}\in \BZ^2$, $\sqrt{\det(g)}=\frac{|\Omega_{12}|}{2}$ and provided $\Omega_{12}$ does not change sign over $\BZ^2$, and the right-hand side inequality is satisfied iff
\begin{align*}
g_{ij}(\bf{k})=e^{2f(\bf{k})}\widetilde{g}_{ij}(\theta=\bf{k}),\ 1\leq i,j\leq 2,
\end{align*}
for some function $f\in C^{\infty}(\BZ^2)$, i.e., $g$ is related to $\widetilde{g}$ by a Weyl rescaling, implying that, provided $P$ is an immersion, they share the same complex structure $j$ (and hence the same modular parameter).
\label{th:tomoki}
\end{theorem}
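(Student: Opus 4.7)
The proof splits naturally into two independent inequalities, so the plan is to treat them in sequence and then handle the saturation conditions.

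For the left inequality $\pi|\mathcal{C}|\leq \vol_g(\BZ^2)$, the plan is to chain together the pointwise Cauchy--Schwarz-type bound of Proposition~\ref{prop:CS} with the triangle inequality for integrals. First I would write $\vol_g(\BZ^2) = \int_{\BZ^2}\sqrt{\det g}\,d^2k \geq \tfrac{1}{2}\int_{\BZ^2}|\Omega_{12}|\,d^2k$, using Proposition~\ref{prop:CS} pointwise. Then I would bound this below by $\tfrac{1}{2}\bigl|\int_{\BZ^2}\Omega_{12}\,d^2k\bigr| = \tfrac{1}{2}|2\pi\mathcal{C}| = \pi|\mathcal{C}|$, using the definition of the first Chern number $\mathcal{C} = \tfrac{1}{2\pi}\int_{\BZ^2}\Omega$. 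Saturation then requires both (i) pointwise saturation of the Cauchy--Schwarz-type inequality, i.e.\ $\sqrt{\det g(\bf{k})} = |\Omega_{12}(\bf{k})|/2$ everywhere, and (ii) the removal of absolute values in the triangle inequality, i.e.\ that $\Omega_{12}$ does not change sign over $\BZ^2$; these are exactly the stated conditions.

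For the right inequality $\vol_g(\BZ^2)\leq \vol_{\widetilde g}(T^2_\theta)$, the key tool is Minkowski's determinant inequality, which states that $A\mapsto \sqrt{\det A}$ is concave on the cone of $2\times 2$ positive semi-definite matrices. Recalling that $\widetilde g_{ij} = \int_{\BZ^2}\tfrac{d^2k}{(2\pi)^2}\, g_{ij}(\bf{k})$ and that the twist-angle torus has period $2\pi$ in each coordinate under the identification $\BZ^2\cong T^2_\theta$, I would apply Jensen's inequality with the probability measure $d^2k/(2\pi)^2$ on $\BZ^2$ to the concave function $\sqrt{\det(\cdot)}$ evaluated on the family $\{g(\bf{k})\}$:
\begin{align*}
\tfrac{1}{(2\pi)^2}\int_{\BZ^2}\sqrt{\det g(\bf{k})}\,d^2k \leq \sqrt{\det\widetilde g}.
\end{align*}
Multiplying by $(2\pi)^2 = \vol(T^2_\theta,d^2\theta)$ gives exactly $\vol_g(\BZ^2)\leq (2\pi)^2\sqrt{\det\widetilde g} = \vol_{\widetilde g}(T^2_\theta)$.

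The main subtlety is the saturation analysis for the right inequality, since $\sqrt{\det}$ is concave but not strictly concave on $2\times 2$ PSD matrices. The plan here is to invoke the equality case of Minkowski's determinant inequality: equality holds pointwise in the Jensen step iff all matrices $g(\bf{k})$ in the family are mutually proportional (up to a positive scalar function). Concretely this means $g(\bf{k}) = h(\bf{k})\,g_0$ for some positive function $h$ and a constant positive-definite $g_0$; averaging gives $\widetilde g = \bar h\,g_0$, so $g(\bf{k}) = (h(\bf{k})/\bar h)\,\widetilde g \equiv e^{2f(\bf{k})}\widetilde g$, which is the stated Weyl rescaling. Conversely, any such conformal relation trivially saturates Jensen's inequality via direct substitution.

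Finally, to conclude the claim about the complex structure, I would just substitute $g_{ij} = e^{2f}\widetilde g_{ij}$ into the formula (\ref{eq:j}) for $j$: both $\sqrt{\det g}$ and the matrix entries $g_{ij}$ scale by $e^{2f}$, so the conformal factor cancels and $j$ coincides with the complex structure $\widetilde j$ determined by $\widetilde g$. This delivers the equality of complex structures and hence of modular parameters whenever $P$ is an immersion (so that $g$, and thus $j$, is actually defined). The only nontrivial ingredient in the whole plan is Minkowski's determinant inequality together with its equality case, which is the standard place where the characterization of saturation is concentrated.
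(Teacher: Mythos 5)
Your proposal is correct, and for the second inequality it takes a genuinely different route from the paper. The left-hand bound (pointwise Proposition~\ref{prop:CS}, then the triangle inequality for the integral, with saturation forcing both pointwise equality and a fixed sign of $\Omega_{12}$) is exactly the paper's argument. For the right-hand bound, however, the paper invokes an $L^2$ Cauchy--Schwarz inequality applied componentwise to the averages $\widetilde g_{ij}=\int_{\BZ^2}\tfrac{d^2k}{(2\pi)^2}g_{ij}$ (the details are deferred to the companion paper), and reads off saturation as the proportionality $g_{ij}(\bf{k})=g_{11}(\bf{k})c_{ij}$; you instead use concavity of $A\mapsto\sqrt{\det A}$ on $2\times2$ positive semidefinite matrices (Minkowski's determinant inequality plus $1$-homogeneity) together with Jensen's inequality for the probability measure $d^2k/(2\pi)^2$, and extract the same proportionality from the equality case of Minkowski. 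Both yield the identical Weyl-rescaling characterization and the same cancellation of the conformal factor in Eq.~(\ref{eq:j}), so the conclusion about $j$ and the modular parameter is unchanged. Your route has the advantage of being coordinate-free and of concentrating the whole saturation analysis in one standard equality case, and it generalizes directly to $(\det)^{1/n}$ in higher dimensions; the paper's route is more elementary. One small point to be aware of: the equality case of Minkowski's inequality in the form ``all $g(\bf{k})$ mutually proportional'' is cleanly stated for positive definite matrices, so at points where $g(\bf{k})$ degenerates you need the extra (easy) observation that adding a nonzero singular PSD matrix to a definite one strictly increases the determinant, and the resulting scalar $h(\bf{k})$ may vanish there, in which case $f=\tfrac12\log(h/\bar h)$ is not literally smooth --- but this caveat is equally implicit in the paper's own statement, which tacitly assumes the conformal factor is well defined.
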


\begin{proof}
The left-hand side inequality is a consequence of Proposition~\ref{prop:CS} and the definitions of the 1st Chern number and the quantum volume. The right-hand side inequality follows from an $L^2-$Cauchy-Schwarz inequality; a detailed derivation can be found in the accompanying paper~\cite{oz:mer:21:published}. Here we will just prove the last additional statement. The saturation of the inequality $\vol_{g}(\BZ^2)\leq \vol_{\widetilde{g}}(T^2_{\theta})$ is shown in~\cite{oz:mer:21:published} to be equivalent to the equality
\begin{align*}
g_{ij}(\bf{k})=g_{11}(\bf{k}) c_{ij}, \text{for every } \bf{k}\in \BZ^2,
\end{align*}
for some constants $c_{ij}$, $1\leq i,j\leq 2$. This means that $g$ and $\widetilde{g}$, defined by Eq.~\eqref{eq: twist-angle quantum metric}, differ by a Weyl rescaling -- hence, provided $P$ is an immersion so that $g$ is non-degenerate, they share the same complex structure. It then follows that
\begin{align*}
\widetilde{g}_{ij}(\theta)=\int_{\BZ^2}\frac{d^2k}{(2\pi)^2}g_{ij}(\bf{k})=\left(\int_{\BZ^2}\frac{d^2k}{(2\pi)^2}g_{11}(\bf{k})\right)c_{ij}.
\end{align*}
The result follows immediately, for
\begin{align*}
e^{2f(\bf{k})}=\frac{g_{11}(\bf{k})}{\int_{\BZ^2}\frac{d^2k}{(2\pi)^2}g_{11}(\bf{k})}.
\end{align*}
\end{proof}

\begin{theorem} For two-dimensional two-band models, there must exist a point in the Brillouin zone where $\det(g)=0$. In other words, the map $P:\BZ^2\to \Gr_{1}(\mathbb{C}^2)=\mathbb{C}P^1\cong S^2$ cannot be an immersion.
\label{th:bruno}
\end{theorem}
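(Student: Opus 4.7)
The plan is to assume for contradiction that $P:\BZ^2\to\Gr_1(\mathbb{C}^2)=\mathbb{C}P^1\cong S^2$ is an immersion at every point, and derive a purely topological contradiction between the two-torus and the two-sphere; no complex-analytic input from the earlier sections is needed. The key initial observation is that both the source and the target are smooth manifolds of real dimension $2$, so at every $\bf{k}\in\BZ^2$ the differential $d_{\bf{k}}P$ is an injective linear map between $2$-dimensional tangent spaces, hence a linear isomorphism. Therefore, by the inverse function theorem, $P$ would be a local diffeomorphism.

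Next, I would upgrade ``local diffeomorphism'' to ``covering map.'' Since $\BZ^2$ is compact, the image $P(\BZ^2)$ is compact and thus closed in $S^2$, while being open as the image of a local diffeomorphism. Connectedness of $S^2$ then forces $P$ to be surjective. Compactness of the source also makes $P$ a proper map, and a proper surjective local diffeomorphism onto a connected Hausdorff manifold is a covering map: fibers are finite discrete sets, and local triviality follows by intersecting the finitely many disjoint local-inverse neighborhoods over a small enough neighborhood of a chosen target point.

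Finally, I would close the argument with the topology of $S^2$. Two equivalent routes are available. The first uses the fact that $\mathbb{C}P^1\cong S^2$ is simply connected, so any connected covering of it is a homeomorphism; this would force $\BZ^2\cong S^2$, contradicting $\pi_1(\BZ^2)=\mathbb{Z}^2\neq 0$. The second uses Euler characteristics: an $N$-sheeted covering satisfies $\chi(\BZ^2)=N\,\chi(S^2)$, i.e.\ $0=2N$, which has no positive-integer solution. Either route produces the contradiction, so $P$ must fail to be an immersion somewhere, i.e.\ there exists $\bf{k}\in\BZ^2$ with $\det(g(\bf{k}))=0$, using the equivalence between the immersion condition and non-degeneracy of $g=P^*g_{FS}$ recalled in Sec.~\ref{sec: main results}.

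The only mildly delicate step is the passage from local diffeomorphism to covering map, which relies on compactness of $\BZ^2$ to guarantee properness and on the finiteness of the fibers to produce the evenly covered neighborhoods; this is a standard fact from differential topology and poses no real obstacle. Everything else is dimension counting and a single topological invariant comparison, so I expect this to be the shortest and most elementary proof in the paper.
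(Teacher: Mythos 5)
Your proof is correct and follows essentially the same route as the paper's: immersion $\Rightarrow$ local diffeomorphism (by equal dimensions and the inverse function theorem) $\Rightarrow$ covering map $\Rightarrow$ contradiction with $\pi_1(\BZ^2)=\mathbb{Z}\oplus\mathbb{Z}$ versus the simply connected $S^2$. You merely spell out the local-diffeomorphism-to-covering-map step (via properness from compactness) in more detail than the paper, and offer the Euler-characteristic count as an equivalent alternative ending.
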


\begin{proof}
In this particular case, since $\Gr_1(\mathbb{C}^2)=\mathbb{C}P^1\cong S^2$ is the Bloch sphere, we get a map from a torus to the two-sphere, $\BZ^2\ni \bf{k}\mapsto n(\bf{k})\in S^2$. If this map were an immersion, then by the inverse function theorem, it would be a local diffeomorphism. As a consequence it would be a covering map.  Now Proposition~1.3.1. of Ref.~\cite{hat:02}, tells us that if we have a covering map then the induced group homomorphism in fundamental groups is injective. The fundamental group of the Brillouin torus $\BZ^2$ is the $\mathbb{Z}\oplus\mathbb{Z}$ Abelian group generated by the non-contractible loops associated with the $k_1$ and $k_2$ directions. Meanwhile, that of the sphere is the trivial group since any loop is contractible to a point. As a consequence, there can be no injective group homomorphism between these two groups. Hence, we find that there can be no immersion map from the torus to the sphere and, as a conclusion, the quantum metric must be degenerate, i.e. non-invertible, somewhere in the Brillouin zone.
\end{proof}

\bibliography{bib.bib}

\end{document}